\documentclass[11pt,a4paper,english]{article}
\usepackage[T1]{fontenc}
\usepackage[utf8]{inputenc}
\usepackage{authblk}
\usepackage{verbatim}
\usepackage{authblk}
\usepackage{cancel}
\usepackage[square,numbers]{natbib}	  
\title{Einstein Equations for a Noncommutative Spacetime of Lie-Algebraic Type}
\author[1]{Albert Much\footnote{amuch@matmor.unam.mx}}
\author[2]{Marcos Rosenbaum \footnote{mrosen@nucleares.unam.mx}}
\author[3]{Jos\'e David  Vergara \footnote{vergara@nucleares.unam.mx}}
\author[4]{Diego Vidal-Cruzprieto \footnote{dvc501@york.ac.uk}}
\affil[1]{Centro de Ciencias Matem\'aticas\\
	UNAM\\
	Morelia, Michoac\'an, M\'exico.}  
\affil[2,3]{Instituto de Ciencias Nucleares, UNAM,  D.F., M\'exico. }  
\affil[4]{Department of Mathematics, The University of York, Heslington, York, YO10 5DD, UK.}    
\usepackage{float}            
\usepackage[T1]{fontenc}   
\usepackage[cmex10]{amsmath}  
\usepackage{amstext}          
\usepackage{mathrsfs}
\usepackage{amsfonts}         
\usepackage{amssymb}          
\usepackage{amsthm}   
\usepackage{bm}               
\usepackage{enumerate}        
\usepackage[bottom]{footmisc} 
\usepackage{array}            
\usepackage{textcomp}   
\usepackage[right]{eurosym}   
\usepackage[square,numbers]{natbib}	  

\usepackage{amsmath}  
\usepackage[colorlinks,pdfpagelabels,pdfstartview = FitH,bookmarksopen = true,bookmarksnumbered =
true,linkcolor = black,plainpages = false,hypertexnames = false,citecolor = black]{hyperref}

\allowdisplaybreaks

\sloppy                       

%
%
\setlength{\unitlength}{1cm}
\setlength{\oddsidemargin}{0cm}
\setlength{\evensidemargin}{0cm}
\setlength{\textwidth}{15cm}
\setlength{\topmargin}{-1cm}
\setlength{\textheight}{23cm}
\columnsep 0.5cm
%

\providecommand{\customgenericname}{}
\newcommand{\newcustomtheorem}[2]{%
  \newenvironment{#1}[1]
  {%
   \renewcommand\customgenericname{#2}%
   \renewcommand\theinnercustomgeneric{##1}%
   \innercustomgeneric
  }
  {\endinnercustomgeneric}
}

\newcustomtheorem{proofof}{\textsc{Proof of Proposition}}

\newtheorem{theorem}{\textsc{Theorem}}[section]

\newtheorem{assumption}{\textsc{Assumption}}[section]
\newtheorem{proposition}{\textsc{Proposition}}[section]

\newtheorem{defi}{\textsc{Definition}}[section]

\newtheorem{remark}{Remark}[section]


\usepackage{fancyhdr}
\pagestyle{fancy}
\fancyhf{}
\fancyhead[L]{\rightmark}
\fancyhead[R]{\thepage}

\numberwithin{equation}{section} 
\begin{document}
	\maketitle
	\abstract{A general formula is calculated   for the  connection of  a central metric w.r.t.\ a noncommutative spacetime  of     Lie-algebraic type. This is done  by using the framework of linear connections on central bi-modules. The general formula is further on used to calculate the corresponding  Riemann tensor and prove the corresponding Bianchi identities and certain symmetries that are essential   to obtain a symmetric and divergenceless Einstein Tensor. In particular, the obtained  Einstein Tensor is not equivalent to the sum of  the noncommutative Riemann tensor and scalar,  as in the commutative case, but in addition a traceless term appears.} 
	
	\tableofcontents	
	 \section{Introduction} A  physically well-motivated and mathematically rigorous approach towards a  theory of quantum gravity is the program initiated by noncommutative geometry.   	Noncommutative geometry is, in addition, to being a generalized formalism of the commutative framework, a  novel approach that arises from a combination of fundamental concepts of general relativity and quantum mechanics. The procedure of going into smaller scales leads us to the point where our description of spacetime as a continuum is neither physically nor mathematically well-defined. This is the subject of investigation and the result of the so-called "geometrical" measurement problem, \cite{DFR}.
The argument stems from a combination of the uncertainty principle and the Schwarzschild radius implying that the measurement of a spacetime point with arbitrary precision is not possible, since the energy involved in this process creates a black hole and thus spacetime, around the Planck length, does not have a continuous structure. The solution is to introduce a quantum spacetime, i.e. a   noncommutative geometry that replaces the continuous spacetime structure. In particular, the geometrical measurement problem and its solution are similar to solving the motion of the electron in the atom by using a quantized version of the phase space, i.e.\ by introducing a noncommutative structure.	Hence, the introduction of a noncommutative structure is from a physical point of view well motivated. 
\newline\newline
In general relativity , the geometric entity that  quantifies the geometric part is the Einstein tensor. Hence,  one line of investigation in noncommutative geometry is to calculate a respective tensor (fulfilling reasonable requirements) for a class of noncommutative spacetimes. A unique Einstein tensor for a class of noncommutative spacetimes would help towards gaining more insight into the quantum nature of spacetime  and ultimately  towards a theory of quantum gravity.	Hence, our primary goal is to construct  a symmetric and divergenceless  \textbf{Einstein tensor for a class of  noncommutative spacetimes} of Lie-algebraic type. Since the Einstein equations   connect the Einstein tensor, a geometric quantity, with the energy-momentum tensor we require symmetry and covariance. This requirement is a deduction of the assumption, implicitly made in this work  that the symmetries and properties of the energy-momentum tensor remain  unchanged by the noncommutativity of spacetime. This excludes, for the moment, certain quantum anomalies. 
\newline\newline
The main difference to our approach w.r.t.\ using  deformation quantization (see for example \cite{A1}, \cite{A2}, \cite{A3}, \cite{A4} and references therein) is the fact that we use the algebraic path given in \cite{koszul_1986}, \cite{DV2} \cite{C}, \cite{DV1},  and \cite{BM}. Moreover, we restrict the investigation  to the case of a noncommutative spacetime of a Lie-algebraic type. This means that we consider classical coordinates  as generators of an algebra $\mathfrak{A}$ whose commutator relations are of the form  (see  \cite[Equation 2]{W})
\begin{align*}
[x^\mu,x^\nu]=C^{\mu\nu}_{\;\;\lambda}x^\lambda.
\end{align*}
Our motivation to study this case  is due to a specific   example of a noncommutative spacetime  of Lie-algebraic type  (i.e.\ bicrossproduct model) given in  \cite{BM}, where   quantum corrections  to the "classical" Einstein tensor appear. 
\newline\newline
As in the standard approach to noncommutative geometry, we define the connection as a linear map that acts on one-forms in the following fashion, \cite{ Mo, landi_2014}:
\begin{align*}
\nabla&: \Omega^1\rightarrow \Omega^1\otimes_{\mathfrak{A}}\Omega^1 \\
\nabla(a\omega)&=da\otimes_{\mathfrak{A}}\omega+a\nabla(\omega)\\
\nabla (\omega a)&=(\nabla \omega)a+\sigma(\omega\otimes_{\mathfrak{A}} da),
\end{align*}
where $a\in \mathfrak{A}$, $\omega \in \mathcal{E}$ is the module and $\sigma$ is a bimodular map known as the generalized braiding defined by,
\begin{align*}
\sigma(\omega\otimes_{\mathfrak{A}} da)&=\nabla (\omega a)-(\nabla \omega)a=\nabla ([\omega, a])-[(\nabla \omega),a]+\nabla(a\omega)-a\nabla(\omega)\\
&=da\otimes_{\mathfrak{A}}\omega+[a,(\nabla \omega)]+\nabla ([\omega, a]).
\end{align*} 
To achieve our main goal, we  use   one essential requirement: The line element has to be in the center of the algebra. Up to date,   central bi-modules (see \cite[Chapter 6-9]{landi_2014}, \cite{DV1}, \cite{DV2} and references therein) play a fundamental role in formulating geometric quantities in   noncommutative geometry (\textbf{NCG}). Moreover,  	the centrality of the metric tensor is due to keeping some of its classical tensorial features, and in particular, it allows us to invert it without any ambiguity, see \cite{BM}. By the use of the central metric, the covariant derivative,  and by imposing metricity and torsionlessness we get an explicit   expression for the connection symbol. The use of this symbol enables us  to  calculate the   Riemann tensor, the corresponding identities and symmetries thereof.  Equipped with these entities and symmetries, we construct  a divergenceless and symmetric Einstein tensor.  	\newline\newline
The organization of this paper is as follows; The second Section sets the precise mathematical framework in order, i.e. it defines the algebra, the corresponding universal differential algebra and the respective derivatives. The third and fourth Sections derive the most general formula for  equivalents of the geometric entities, such as the metric, the connection symbol and the Riemann tensor. We use these entities and their corresponding symmetries   in the fifth Section to define a unique Einstein tensor w.r.t.  a noncommutative spacetime  of   Lie-algebraic type. 
\section{Preliminaries}
In this Section we first define the Lie algebra, which plays a fundamental role in this work. Next, we construct the universal differential calculus, that  provides a generalization of the notions   used in commutative differential geometry. we follow the universal differential calculus developed by Dubois-Violette (c.f. \cite{DV1,DV2}). Finally, we introduce the centrality condition, used in \cite{BM},  that provides both \textit{tensor} features for the metric and, --alongside the universal differential calculus -- a way to calculate noncommutative corrections for the curvature quantities.   For the forthcoming  calculations, we use the fact that we have an associative and unital algebra induced from taking the largest embedding of the Lie algebra into an associative algebra, i.e. the universal enveloping. Equipped with the universal enveloping algebra, the universal differential calculus allows to define an algebra of forms, similar to the differential forms of the commutative case. Yet, the idea in noncommutative geometry is to use the recently introduced algebra and its derivations as opposed to the commutative case, where the algebra of smooth functions is used and, instead of derivations we have smooth vector fields. 
	\subsection{Differential Calculus} Let $\mathcal{A}$ be a Lie-algebra and let $\mathcal{U}(\mathcal{A})$,
	which we denote by $\mathfrak{A}$, be its universal enveloping algebra
	that has  as generators denoted by $x^\mu$,  fulfilling  the following commutation
	relations, (see  \cite[Equation 2]{W})
		\begin{align}
		[x^\mu,x^\nu]=C^{\mu\nu}_{\;\;\;\lambda}x^\lambda. \label{st}
		\end{align}	with structure constants $C^{\mu\nu}_{\;\;\;\lambda}\in\mathbb{C}$ for each $\mu, \nu$ and $\lambda$. In addition,  the Jacobi identities for the structure constants are imposed and they read,
		\begin{align*}
		C^{\mu\nu}_{\;\;\;\rho}C^{\lambda\rho}_{\;\;\;\sigma}+C^{\lambda\mu}_{\;\;\;\rho}C^{\nu\rho}_{\;\;\;\sigma}+C^{\nu\lambda}_{\;\;\;\rho}C^{\mu\rho}_{\;\;\;\sigma}=0.
		\end{align*}	 
		In this way the universal enveloping algebra  $\mathcal{U}(\mathcal{A})$,
		defines our spacetime. 		This definition serves as a starting point for defining a differential calculus and we complement it with the notion of a universal differential algebra. 
		 	
\begin{defi}
		Consider an associative algebra $\mathfrak{A}$ over $\mathbb{C}$ with unit. The \textbf{universal differential algebra} of forms (c.f. \cite[Chapter 7, Section 1]{landi_2014} and \cite[Chapter 3, Section 1]{C}) denoted by $\Omega(\mathfrak{A})=\bigoplus_{p}\Omega^p(\mathfrak{A})$,  is defined as follows:
		
		For $p = 0$ it is the algebra itself, i.e. $\Omega^0(\mathfrak{A}) = \mathfrak{A}$. The space $\Omega^1(\mathfrak{A})$ of one-forms is generated,
		as a left $\mathfrak{A}$-module by a $\mathbb{C}$-linear operator $d: \mathfrak{A} \to \Omega^1(\mathfrak{A})$, called the universal	differential or Karoubi operator, which satisfies the relations,
		\begin{equation}
	d^2=0, \;\;\;d(ab)=(da)b+adb, \;\;\; \forall a,b \in \mathfrak{A}. \label{univ_dif}
		\end{equation}
		If $\Omega^1(\mathfrak{A})$ is a left (right) $\mathfrak{A}$-module we can induce a right (left) $\mathfrak{A}$-module structure via the universal differential given in Equation (\ref{univ_dif}), which makes $\Omega^1(\mathfrak{A})$ a bi-module. The $\Omega^p(\mathfrak{A})$-space is build by 
		\begin{align*}
		\Omega^p(\mathfrak{A})=\underbrace{\Omega^1(\mathfrak{A})\otimes_{\mathfrak{A}}\cdots\otimes_{\mathfrak{A}}\Omega^1(\mathfrak{A})}_{p-\text{times}}.
		\end{align*}
		The product between any two one-forms is defined by juxtaposition
		\begin{align*}
			(x_0dx_1)(y_0dy_1)=x_0(dx_1)y_0dy_1=x_0d(x_1y_0)dy_1-x_0x_1dy_0dy_1,
		\end{align*}
		where the result is  generalized to the product of any $p$-form with any $(q-1)$-form,
		\begin{align*}
			(x_0dx_1\cdots &dx_p)(x_{p+1}dx_{p+2}\cdots dx_{p+q})=x_0dx_1\cdots (dx_p)x_{p+1}dx_{p+2}\cdots dx_{p+q}\\
			&=(-1)^px_0x_1dx_2\cdots dx_{p+q}+\sum_{i=1}^{p}x_0dx_1\cdots dx_{i-1}d(x_ix_{i+1})dx_{i+2}\cdots dx_{p+q}.
		\end{align*}
		An immediate consequence of the definition is that the differential algebra of forms is graded, since now one can extend $d$ as a linear operator $:\Omega^p(\mathfrak{A})\to\Omega^{p+1}(\mathfrak{A})$ that acts as follows
			\begin{align*}
			d(x_0dx_1\cdots dx_p)=:dx_0dx_1\cdots dx_p.
			\end{align*} 
			For any integer $p$ the following relations hold
			\begin{equation*}
			d^2=0, \;\;\;d(ab)=(da)b+(-1)^pa\,db, \;\;\;
			\end{equation*}
			where $a\in \Omega^p(\mathfrak{A})$, $b\in\Omega^1(\mathfrak{A})$.
		
\end{defi}

By recalling the fact that the universal algebra ${\mathfrak A}$ inherits the Lie algebraic commutators of  (\ref{st}) we obtain our first result. 
	\begin{proposition}	 The commutator of the differential of the algebra, i.e.\ $dx\in\Omega^1(\mathfrak A)$, with generators of the algebra ${\mathfrak A}$ has the following solution,
		\begin{equation} \label{defo}
		[dx^\mu,x^\nu]=\left(\frac{1}{2}C^{\mu\nu}_{\;\;\;\lambda}+S^{\mu\nu}_{\;\;\;\lambda}\right)dx^\lambda=:D^{\mu\nu}_{\;\;\;\lambda}dx^\lambda ,
		\end{equation}
		where the   tensor components $S^{\mu\nu}_{\;\;\;\lambda}\in\mathbb{C}$ are symmetric in $\mu,\nu$.
		\begin{proof}
			We act with the universal differential on the commutator, where the left hand side renders commutators with each of them containing a one-form basis and a generator, while the right hand side is just the structure constants of the Lie-algebra contracted with a one form-basis,
			\begin{align*}
			d[x^\mu,x^\nu]&=[dx^\mu,x^\nu]+[x^\mu,dx^\nu]\\&=\left(\frac{1}{2}C^{\mu\nu}_{\;\;\;\lambda}+S^{\mu\nu}_{\;\;\;\lambda}\right)dx^\lambda-\left(\frac{1}{2}C^{\nu\mu}_{\;\;\;\lambda}+S^{\nu\mu}_{\;\;\;\lambda}\right)dx^\lambda\\&=C^{\mu\nu}_{\;\;\;\lambda}dx^\lambda.
			\end{align*}
			 
		\end{proof}
	\end{proposition}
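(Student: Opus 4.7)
The plan is to determine the most general form of the commutator $[dx^\mu, x^\nu]$ that is consistent with the universal differential calculus of Equation (\ref{univ_dif}) applied to the defining Lie-algebra relations (\ref{st}). First, I would posit a general ansatz: since $d$ sends the algebra into $\Omega^1(\mathcal{A})$ and the generators $x^\nu$ act on one-forms, the commutator $[dx^\mu, x^\nu]$ must itself be a one-form. Because the structure constants in (\ref{st}) are constant and the only one-forms available from the generators are the $dx^\lambda$, a natural ansatz is $[dx^\mu, x^\nu] = D^{\mu\nu}_{\;\;\;\lambda}\, dx^\lambda$ for some complex constants $D^{\mu\nu}_{\;\;\;\lambda}$ to be determined.

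Next, I would apply the universal differential $d$ directly to both sides of (\ref{st}). On the left, the Leibniz rule in (\ref{univ_dif}) gives $d[x^\mu, x^\nu] = [dx^\mu, x^\nu] + [x^\mu, dx^\nu]$, while on the right one simply obtains $C^{\mu\nu}_{\;\;\;\lambda}\, dx^\lambda$ since the structure constants are scalars. Substituting the ansatz into the left-hand side yields the constraint $\left(D^{\mu\nu}_{\;\;\;\lambda} - D^{\nu\mu}_{\;\;\;\lambda}\right) dx^\lambda = C^{\mu\nu}_{\;\;\;\lambda}\, dx^\lambda$, i.e.\ the antisymmetric part of $D$ in the upper indices is fixed to be $\frac{1}{2} C^{\mu\nu}_{\;\;\;\lambda}$.

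Finally, I would observe that the symmetric part of $D^{\mu\nu}_{\;\;\;\lambda}$ in $\mu,\nu$ is entirely unconstrained by consistency with (\ref{st}) and (\ref{univ_dif}), and hence can be named as a free symmetric tensor $S^{\mu\nu}_{\;\;\;\lambda}$. This reproduces exactly the claimed decomposition (\ref{defo}). The only mildly delicate point, rather than a true obstacle, is justifying that the ansatz is the most general one compatible with the calculus, which is where the text's later remark about a ``symmetric factor [that] may be added if we consider the sum of two commutators'' enters and which motivates the authors to carry the undetermined tensor $S^{\mu\nu}_{\;\;\;\lambda}$ through the rest of the paper.
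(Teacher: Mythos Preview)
Your proposal is correct and follows essentially the same route as the paper: apply $d$ to the relation $[x^\mu,x^\nu]=C^{\mu\nu}_{\;\;\;\lambda}x^\lambda$ via the Leibniz rule, reduce the constraint to the antisymmetric part of $D^{\mu\nu}_{\;\;\;\lambda}$ being $\tfrac{1}{2}C^{\mu\nu}_{\;\;\;\lambda}$, and observe that the symmetric part $S^{\mu\nu}_{\;\;\;\lambda}$ is unconstrained. If anything, your version is slightly more explicit in motivating the ansatz and deriving (rather than merely verifying) the decomposition, but the argument is the same.
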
 
	Precisely  the notion of a universal differential algebra  allows us to regard $dx^\mu$ as a one-form basis for the space $\Omega^1( {\mathfrak A})$.

	\begin{defi}\label{def:der} \textbf{Derivations} $\hat{\partial}: \mathfrak{A}\mapsto \mathfrak{A}$ are  maps on the universal enveloping algebra   satisfying the following commutation relation  \cite[Equation 5.5]{AS09},
\begin{align}
 [\hat{\partial}_{\alpha},x^{\beta}] =\delta_{\alpha}^{\beta}+D_{\ \ \alpha}^{\beta\gamma}\hat{\partial}_{\gamma},\label{der}
  \end{align} 
    with $D_{\ \ \alpha}^{\beta\gamma}$ given in (\ref{defo}).
   \end{defi}   
    \begin{proposition}
The  Jacobi identities, obtained from derivations and generators of the algebra $\mathfrak{A}$ imply that the  derivatives commute, i.e. $[\hat{\partial}_{\nu},\hat{\partial}_{\alpha}]=0$, and that the  structure constants $D$ and $C$ obey the following relations
\begin{align*}
D^{\nu\rho}_{\ \ \alpha}D_{\ \ \rho}^{\mu\beta}+ D^{\mu\rho}_{\ \ \alpha}D_{ \ \ \rho}^{\nu\beta} +C^{\mu\nu}_{\ \ \rho}D_{ \ \ \alpha}^{\rho\beta}=0.
\end{align*}
\begin{proof} 			
 The first identity is obtained by combining two derivatives and the generator of the algebra and by using the relation  $[\hat{\partial}_{\nu},\hat{\partial}_{\alpha}]=:F_{\nu\alpha}^{\gamma}\hat{\partial}_{\gamma}$, due to  \cite[Section 3.2]{DuboisViolette:1995hh}, 	where $F\in\mathbb{C}$, i.e.\
\begin{align*}0&=
[x^{\mu},[\hat{\partial}_{\nu},\hat{\partial}_{\alpha}]]+[\hat{\partial}_{\nu},[\hat{\partial}_{\alpha},x^{\mu}]]- [\hat{\partial}_{\alpha},[\hat{\partial}_{\nu},x^{\mu}]]
\\&=
 F_{\nu\alpha}^{\gamma}[x^{\mu},\hat{\partial}_{\gamma}]
+\left( D_{\ \ \alpha}^{\mu\gamma}
 F_{\nu\gamma}^{\beta}\hat{\partial}_{\beta} - 
\mu\leftrightarrow\nu\right)  \\&=F_{\nu\alpha}^{\gamma}
\left(
\delta_{\gamma}^{\mu}+D_{\ \ \gamma}^{\mu\rho}\hat{\partial}_{\rho}
\right) 
+\left( D_{\ \ \alpha}^{\mu\gamma}
F_{\nu\gamma}^{\beta}\hat{\partial}_{\beta} - 
\mu\leftrightarrow\nu\right) \\&=F_{\nu\alpha}^{\mu}+\left(F_{\nu\alpha}^{\gamma}D_{\ \ \gamma}^{\mu\rho}
+  D_{\ \ \alpha}^{\mu\gamma}
F_{\nu\gamma}^{\rho} -   D_{\ \ \alpha}^{\nu\gamma}
F_{\mu\gamma}^{\rho}\right)\hat{\partial}_{\rho}=0,
 \end{align*}
from which  $F_{\nu\gamma}^{\beta}=0$ follows. Next, we take the Jacobi identity of the derivative with two elements of the algebra, i.e.\
\begin{align*}0&=
[x^{\mu},[x^{\nu},\hat{\partial}_{\alpha}]]-[x^{\nu},[x^{\mu},\hat{\partial}_{\alpha}]]+ [\hat{\partial}_{\alpha},[x^{\mu},x^{\nu}]]\\&=D^{\nu\rho}_{\ \ \alpha}
[x^{\mu},\hat{\partial}_{\rho}] - 
\mu\leftrightarrow\nu+C^{\mu\nu}_{\ \ \rho} [\hat{\partial}_{\alpha}, x^{\rho} ]
\\&=D^{\nu\rho}_{\ \ \alpha}
\left(
\delta_{\rho}^{\mu}+D_{\ \ \rho}^{\mu\beta}\hat{\partial}_{\beta}
\right)   - 
\mu\leftrightarrow\nu+C^{\mu\nu}_{\ \ \rho}
\left(
\delta_{\alpha}^{ \rho}+D_{\ \ \alpha}^{\rho\beta}\hat{\partial}_{\beta}
\right) 
\\&= D^{\nu\mu}_{\ \ \alpha}
+ D^{\nu\rho}_{\ \ \alpha}D_{\ \ \rho}^{\mu\beta}\hat{\partial}_{\beta}
   -  D^{\mu\nu}_{\ \ \alpha}
   + D^{\mu\rho}_{\ \ \alpha}D_{\ \ \rho}^{\nu\beta}\hat{\partial}_{\beta} 
   +C^{\mu\nu}_{\ \ \alpha}
   +C^{\mu\nu}_{\ \ \rho}D_{\ \ \alpha}^{\rho\beta}\hat{\partial}_{\beta} 
\\&= D^{\nu\mu}_{\ \ \alpha}-  D^{\mu\nu}_{\ \ \alpha}  +C^{\mu\nu}_{\ \ \alpha}
+ (D^{\nu\rho}_{\ \ \alpha}D_{\ \ \rho}^{\mu\beta}+ D^{\mu\rho}_{\ \ \alpha}D_{ \ \ \rho}^{\nu\beta} +C^{\mu\nu}_{\ \ \rho}D_{ \ \ \alpha}^{\rho\beta})
 \hat{\partial}_{\beta} =0.
\end{align*}
The term without derivative is zero since the skew-symmetric term of  the constants $D$ are $1/2C$. 
\end{proof}
 \end{proposition}
\begin{remark}[Leading order]
	W.r.t. the subsequent discussion all equalities are to first order in the structure constants $D^{\mu\nu}_{\ \ \lambda}$. 
\end{remark}
  \begin{proposition}
 	Let us define derivations $D_{\alpha}:\mathfrak{A}\mapsto \mathfrak{A}$, by   using the  derivation $\hat{\partial}$  given in Definition \ref{def:der},
 	\begin{align}
 	D_{\alpha} A :=   [\hat{\partial}_{\alpha},A ],  \label{der1}
 	\end{align} 
 	for all $\alpha$ and $A\in \mathfrak{A}$. Then,  the exterior derivative $d$ is given as
 	\begin{align*}
 	dA=(D_{\alpha} A)\,dx^{\alpha}.
 	\end{align*}  
 \end{proposition}

 \begin{proof} Let us take an object of the form $A^{\mu\nu}:=x^\mu x^\nu\in \mathfrak{A}$. By applying the exterior derivative (also known as the Karoubi operator) $d$ on this object it is possible to extract the derivation, i.e. $dA^{\mu\nu}= (D_{\alpha}A^{\mu\nu})dx^{\alpha}$ by using the Leibniz  rule, i.e.\
 	\begin{align*}
 	dA^{\mu\nu}&=d(x^{\mu}x^{\nu})=(dx^{\mu})x^{\nu}+x^{\mu}dx^{\nu}\\&=[dx^{\mu},x^{\nu}]+x^{\mu}dx^{\nu}+x^{\nu}dx^{\mu}
 	\\&=\underbrace{\left(D^{\mu\nu}_{\ \ \rho}+x^{\mu}\delta^{\nu}_{\rho}+x^{\nu}\delta^{\mu}_{\rho}\right)}_{=:D_{{\rho}}A^{\mu\nu}}dx^{\rho}.
 	\end{align*}
 	Next, we use   Equation (\ref{der}) and the defining Equation (\ref{der1}),
 	\begin{align*}
 	D_{\rho}A^{\mu\nu}&=[\hat{\partial}_{\rho},x^{\mu}x^{\nu}]\,1_{\mathfrak{A}}
 	\\&=[\hat{\partial}_{\rho},x^{\mu}]x^{\nu}\,1_{\mathfrak{A}}
 	+x^{\mu}[\hat{\partial}_{\rho},x^{\nu}]\,1_{\mathfrak{A}}	\\&=[\hat{\partial}_{\rho},x^{\mu}]x^{\nu}\,1_{\mathfrak{A}}
 	+x^{\mu}[\hat{\partial}_{\rho},x^{\nu}]\,1_{\mathfrak{A}}
 	\\&=
 	\delta_{\rho}^{\mu}x^{\nu}\,1_{\mathfrak{A}}+D_{\ \ \rho}^{\mu\gamma}\hat{\partial}_{\gamma}x^{\nu}\,1_{\mathfrak{A}}+
 	x^{\mu} 	\delta_{\rho}^{\nu}\,1_{\mathfrak{A}}+	x^{\mu} D_{\ \ \rho}^{\nu\gamma}\hat{\partial}_{\gamma}\,1_{\mathfrak{A}}	\\&=
 	\delta_{\rho}^{\mu}x^{\nu}\,1_{\mathfrak{A}}+	x^{\mu} 	\delta_{\rho}^{\nu}\,1_{\mathfrak{A}}+
 	D_{\ \ \rho}^{\mu\gamma}[\hat{\partial}_{\gamma},x^{\nu}] \,1_{\mathfrak{A}}
 	+	x^{\mu} D_{\ \ \rho}^{\nu\gamma}\hat{\partial}_{\gamma}\,1_{\mathfrak{A}}
 	+	x^{\nu} D_{\ \ \rho}^{\mu\gamma}\hat{\partial}_{\gamma}\,1_{\mathfrak{A}}
 	\\&=\left(  
 	\delta_{\rho}^{\mu}x^{\nu} +	x^{\mu} 	\delta_{\rho}^{\nu} +
 	D_{\ \ \rho}^{\mu\nu}\right)\,1_{\mathfrak{A}}, 
 	\end{align*}
 	The result agrees with the one obtained by making use of the Leibniz rule and is by linearity extendable to any polynomial.
 \end{proof} 
	\subsection{Centrality condition}
In addition to  the differential calculus, the centrality (w.r.t. the algebra) of the  metric is an essential requirement, since it allows us to keep classical tensor-like features for the metric, such as a well-defined inverse. In this section, we introduce and investigate the implications of the centrality requirement. We begin by defining the line element as a tensor product of two one-forms. 
		\begin{defi}
			The \textbf{center of the algebra of two-forms},  denoted by $\mathcal{Z}(\Omega^1({\mathfrak A} )\otimes_{{\mathfrak A} }\Omega^1({\mathfrak A} ))$ is defined as the following set
			\begin{equation*}
			\mathcal{Z}(\Omega^1({\mathfrak A} )\otimes_{{\mathfrak A} }\Omega^1({\mathfrak A} )):=\left\{
			z\in \Omega^1({\mathfrak A} )\otimes_{{\mathfrak A} }\Omega^1({\mathfrak A} )\;|\;[z,a]=0,\,\,\forall a\in {\mathfrak A}   \right\}.
			\end{equation*}
		\end{defi}
	\begin{defi}
		Let the \textbf{line element} $g$ be an element of the center of two-forms, i.e. $g\in\mathcal{Z}(\Omega^1({\mathfrak A} )\otimes_{{\mathfrak A} }\Omega^1({\mathfrak A} ))$, expressed in terms of the basis as,
		\begin{align*}
		g=g_{\mu\nu}dx^\mu\otimes_{{\mathfrak A} }dx^\nu.
		\end{align*} 
	\end{defi}
		\begin{assumption}
				In the following we assume symmetry for the metric components $$g_{\mu\nu}=g_{\nu\mu}.$$
		\end{assumption} 
	\begin{proposition} 
	The commutator of the generators of the algebra ${\mathfrak A}$ and the central line element $g\in\mathcal{Z}(\Omega^1({\mathfrak A} )\otimes_{{\mathfrak A} }\Omega^1({\mathfrak A} ))$ is given in components by,
		\begin{equation}
		[x^\lambda, g_{\mu\nu}]=D^{\alpha\lambda}_{\;\;\;\mu}g_{\alpha\nu}+D^{\alpha\lambda}_{\;\;\;\nu}g_{\alpha\mu}. \label{cen}
		\end{equation}
		\begin{proof}
			Demanding\footnote{For the rest of this text we   omit the subscript on the tensor product.}  centrality means that   equation $[x^\lambda , g]=0$ holds for all values of $\lambda$, thus
			\begin{align*}
			0&=[x^\lambda ,g ]=[x^\lambda, g_{\mu\nu}dx^\mu\otimes dx^\nu]\\&=[x^\lambda, g_{\mu\nu}]dx^\mu\otimes dx^\nu+ g_{\mu\nu}[x^\lambda,dx^\mu\otimes dx^\nu] \nonumber \\
			&=[x^\lambda, g_{\mu\nu}]dx^\mu\otimes dx^\nu-D^{\mu\lambda}_{\;\;\alpha} g_{\mu\nu}dx^\alpha\otimes dx^\nu-D^{\nu\lambda}_{\;\;\alpha}g_{\mu\nu}dx^\mu\otimes dx^\alpha,  
			\end{align*}
			where in the last lines we used the Leibniz rule and the solution of the commutator relation between the algebra and the differentials given in Equation (\ref{defo}).
		\end{proof}
	\end{proposition}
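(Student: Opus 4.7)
The plan is to translate the operator equation $[x^\lambda, g] = 0$ into a component identity by expanding $g = g_{\mu\nu}\, dx^\mu \otimes dx^\nu$ and iterating the Leibniz rule $[x^\lambda, AB] = [x^\lambda, A]\,B + A\,[x^\lambda, B]$ in tandem with the deformation of the differential structure established in Equation (\ref{defo}).

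First I would split the commutator $[x^\lambda, g_{\mu\nu}\, dx^\mu \otimes dx^\nu]$ into a piece that acts on the algebra-valued coefficient $g_{\mu\nu}$ and a piece that acts on the basis tensor $dx^\mu \otimes dx^\nu$. The first piece is exactly the quantity $[x^\lambda, g_{\mu\nu}]\, dx^\mu \otimes dx^\nu$ that will appear on the left-hand side of the claimed identity. Next I would apply Leibniz again to $[x^\lambda, dx^\mu \otimes dx^\nu]$ so as to distribute the commutator across the tensor product, taking care that the tensor product is over $\mathcal{A}$ and that commutators pass through it.

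Once the expression is fully expanded, I would invoke (\ref{defo}) in the form $[x^\lambda, dx^\mu] = -D^{\mu\lambda}_{\;\;\;\alpha}\, dx^\alpha$ to eliminate every commutator involving a differential in favor of a $\mathbb{C}$-linear combination of basis one-forms. This produces three contributions to the vanishing sum: the coefficient term, and two terms of the shape $-D^{\mu\lambda}_{\;\;\;\alpha} g_{\mu\nu}\, dx^\alpha \otimes dx^\nu$ and $-D^{\nu\lambda}_{\;\;\;\alpha} g_{\mu\nu}\, dx^\mu \otimes dx^\alpha$. Relabeling dummy indices so that all three terms share the common basis tensor $dx^\mu \otimes dx^\nu$ and equating coefficients yields Equation (\ref{cen}) upon solving for $[x^\lambda, g_{\mu\nu}]$.

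The only delicate point I anticipate is the last step: one must be entitled to equate coefficients of $dx^\mu \otimes dx^\nu$, which requires that these tensors form a basis of $\Omega^1(\mathcal{A}) \otimes_{\mathcal{A}} \Omega^1(\mathcal{A})$ as an $\mathcal{A}$-bimodule. This is precisely the content of the differential calculus set up in the previous subsection, so it is a matter of citing it rather than a genuine obstacle; the rest of the argument is a bookkeeping exercise in index renaming and applying (\ref{defo}) twice.
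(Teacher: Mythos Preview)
Your proposal is correct and follows essentially the same route as the paper: expand $g$ in the basis, apply the Leibniz rule to peel off $[x^\lambda,g_{\mu\nu}]$, then use Equation~(\ref{defo}) on each tensor factor of $dx^\mu\otimes dx^\nu$ and read off the component identity. The only addition is your explicit remark about equating coefficients, which the paper leaves implicit.
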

By using the explicit result of the centrality condition and the algebra of the outer derivatives we calculate some commutators between the metric and its derivations with the generators of the algebra that will be of use later on.
 \begin{proposition}[]\label{prop:cmd}
The commutator relations of the generators of the algebra $\mathfrak{A}$ with the inverse metric and with   derivations of the metric are given by, 
	\begin{align}\nonumber
	[x^{\beta},g^{\sigma\nu}]
	&=-
	(  D^{\sigma\beta}_{\ \ \rho} g^{\rho\nu}+
	D^{\nu\beta}_{\ \ \rho}g^{\rho\sigma}
	),
\\\label{e2}
	[x^{\mu},D_{ \alpha}g_{\lambda\nu} ]&=
	D_{\ \ \lambda}^{\beta\mu}D_{ \alpha}g_{\beta\nu}+  D_{\ \ \nu}^{\beta\mu}D_{ \alpha}g_{\beta\lambda}-D_{ \ \ \alpha}^{\mu\beta}
	D_{\beta} g_{\lambda\nu}.
	\end{align}  
	\begin{proof}
See Appendix \ref{proof:e2}.

\end{proof}
\end{proposition}

\section{Quantum connection}
 The concept of the connection for noncommutative algebras is introduced in this section.
 The connection plays a central role both in commutative and non-commutative geometry since it encodes the information needed to calculate the curvature tensors and scalars.   First, we  define the covariant derivative for left $a\omega$ and right $\omega a$ modules, where $a$ and $\omega$ are elements of the algebra $\mathfrak{A}$ and the module $\Omega^1( \mathfrak{A})$ respectively.  By using the central metric and the covariant derivative we impose metricity and torsionlessness in order to obtain an explicit connection symbol.   With regards to  the connection see \cite{koszul_1986}, see \cite{DV2} and   \cite{DV1}.  \begin{defi}\label{c1}
 	The \textbf{connection} $\nabla:\Omega^1(\mathfrak{A})\rightarrow \Omega^1(\mathfrak{A})\otimes_{\mathfrak{A}}\Omega^1(\mathfrak{A})$ is a linear map that acts on one-forms in the following fashion
 	\begin{align*} 
 	&\nabla(a\omega)=da\otimes_{\mathfrak{A}}\omega+a\nabla(\omega),\\
 	&\nabla (\omega a)=(\nabla \omega)a+\sigma(\omega\otimes_{\mathfrak{A}} da),
 	\end{align*}
 	where $a\in \mathfrak{A}$, $\omega \in \Omega^1( \mathfrak{A})$ is the module  and the symbol $\sigma$ is a bi-modular map known as generalized braiding. The bi-modular map, also called the \textbf{generalized braiding}		$\sigma$ is obtained by using both  expressions of  covariant derivatives w.r.t. a left or right module (see Definition \ref{c1}), i.e.\
 	\begin{align}
 	\sigma(\omega\otimes_{\mathfrak{A}} da)&=\nabla (\omega a)-(\nabla \omega)a=\nabla ([\omega, a])-[(\nabla \omega),a]+\nabla(a\omega)-a\nabla(\omega)\nonumber\\
 	&=da\otimes_{\mathfrak{A}}\omega+[a,(\nabla \omega)]+\nabla ([\omega, a]). \label{braid}
 	\end{align}
 	Moreover, by using the generalized braiding the covariant derivative   applied to the tensor product over $\mathfrak{A}$ of $s$ copies of  $\Omega^1(\mathfrak{A})$, as in \cite{Mo},   is explicitly given by
 	\begin{align*}
 	\nabla(\omega\otimes_{\mathfrak{A}}\omega')
 	=\nabla \omega\otimes_{\mathfrak{A}}\omega'+
 	\sigma_s( \omega\otimes_{\mathfrak{A}}\nabla\omega'),
 	\end{align*}
 	where $\omega$ is in $\Omega^1(\mathfrak{A})$
 	and $\omega'$ is in $\otimes_{\mathfrak{A}}^{s-1}\Omega^{1}(\mathfrak{A})$
 	and the generalized braiding with sub-index $s$   is given by the relation 
 	\begin{align*}
 	\sigma_s=\sigma \otimes_{\mathfrak{A}}\underbrace{
 		\mathbb{I}\otimes_{\mathfrak{A}}\mathbb{I}\otimes_{\mathfrak{A}}\cdots\otimes_{\mathfrak{A}}\mathbb{I}}_{s-1\,\,\text{times}}. 
 	\end{align*}
 \end{defi}  Next, we extend this definition to the tensor product over the algebra of an arbitrary number of copies of the module (see \cite[Eq 3.9]{Mo}), which we use in Section \ref{sec:4}  to calculate the Riemann tensor. 
 \begin{defi}\label{ext}
 	Let $\Omega^{*}(\mathfrak{A})=\bigoplus_{p=0}^n\Omega^p(\mathfrak{A})$ for an arbitrary integer $n$. Then, the following mapping 
 	\begin{align*}
 	D:\Omega^{*}(\mathfrak{A})\otimes_{\mathfrak{A}}\Omega^1(\mathfrak{A})\to\Omega^{*}(\mathfrak{A})\otimes_{\mathfrak{A}}\Omega^1(\mathfrak{A}),
 	\end{align*}  
 	extends the notion of covariant derivative beyond one-forms and acts as
 	\begin{align*}
 	D(\omega\otimes \omega')&=d(\omega)\otimes \omega'+(-1)^{\text{deg}\omega}\omega\cdot\nabla(\omega'),\\
 	\omega\cdot(\eta\otimes\eta')&:=(\omega\wedge\eta)\otimes\eta',
 	\end{align*}
 	with $\omega\in \Omega^{*}(\mathfrak{A})$,  $\omega'\in\Omega^1(\mathfrak{A})$ and {wedge product}    defined as
 	\begin{align*}
 	\omega\wedge\omega'=\omega\otimes \omega'-\omega'\otimes \omega.
 	\end{align*} 
 \end{defi} 
\begin{remark}For the covariant derivative of a one-form,  generated by applying the Karoubi operator on generators of the algebra $\mathcal{A}$, we write 
	\begin{align*}
	\nabla(dx^\mu)=&-  \tilde{\Gamma}^{\mu}_{\rho\sigma}  dx^\rho\otimes dx^\sigma,
	\end{align*}
	where $\tilde{\Gamma}$ is the connection symbol up to first order. For the connection symbol up to zero order we use the symbol ${\Gamma}$.
\end{remark}
Next,   we derive  the most basic formula for our present work; the covariant derivative up to first order in the noncommutative parameters. This formula   is of use to extend   the Riemann curvature to the noncommutative framework. First, we calculate torsion, which is given by $T=d-\wedge\circ\nabla$, \cite{DV1}. By using the condition of zero torsion we advance to the metricity condition from   which we extract, by standard methods, the connection symbol. 
\begin{proposition}\label{prop:tors}
	Let the one-forms of the universal differential algebra $\Omega^1(\mathcal{A})$ be given by a basis of the form $\omega^\mu=dx^\mu$. Then, the torsion  in components reads  	
$$T^\mu_{\alpha\beta}=\tilde{\Gamma}^\mu_{\alpha\beta}-\tilde{\Gamma}^\mu_{\beta\alpha}.$$
	\begin{proof}
		In the case at hand  we have $\omega^\mu=dx^\mu$, therefore
		\begin{align*}
		T(dx^\mu)&=d^2x^\mu-\wedge\circ\nabla(dx^\mu)=\wedge(\tilde{\Gamma}^\mu_{\alpha\beta}dx^\alpha\otimes dx^\beta)=\tilde{\Gamma}^\mu_{\alpha\beta}dx^\alpha\wedge dx^\beta.
		\end{align*} 
	\end{proof}	 
	\begin{assumption}	\label{ass:tor} 
		Henceforth, we  assume \textbf{zero torsion}, i.e. 
		\begin{align*}
		T^\mu_{\alpha\beta}=0.
		\end{align*}
	\end{assumption}
\end{proposition}
 
	\begin{theorem}\label{P3.1}[\textbf{General formula}]
			By imposing torsionlessness (see Assumption \ref{ass:tor}) and the metricity condition, i.e.\ $\nabla g=0$,		the \textbf{connection symbol}   for the most general Lie-algebraic type of noncommutative spacetime is explicitly given by 
		\begin{align*}
		\tilde{\Gamma}^{\kappa}_{\alpha\mu} & 
		= \overline {\Gamma}_{\alpha\mu}^{\kappa}-g^{\kappa\nu}  \Sigma_{\alpha\mu\nu},
		\end{align*}
		with the following defined symbols 
		\begin{align*}
		\overline {\Gamma}_{\alpha\mu}^{\kappa}
		&:=\frac{1}{2}g^{\kappa\nu}\left(D_{ {\alpha}}g_{\mu\nu }+D_{ {\mu}}g_{\nu\alpha}-D_{ {\nu}}g_{\alpha\mu} \right), \;\;\;\;\;\;\;\;\;\;
		\\\Sigma_{\alpha\mu\nu}&:=g_{\lambda\beta}\Sigma^{\lambda\beta}_{\alpha\mu\nu} =-g_{\lambda\beta}\overline {\Gamma}^\beta_{\rho\nu}([x^\rho,\overline {\Gamma}^\lambda_{\alpha\mu}]+D^{\lambda\rho}_{\ \ \sigma}\overline {\Gamma}^{\sigma}_{\alpha\mu}-D^{\sigma\rho}_{\ \ \alpha}\overline {\Gamma}^\lambda_{\sigma\mu}-D^{\sigma\rho}_{\ \ \mu}\overline {\Gamma}^\lambda_{\alpha\sigma}).
		\end{align*}  
	\end{theorem}

\begin{proof}
	In order to compare we rewrite the derivative of an arbitrary $n$-form (zero forms included) as $dg=(D_{\alpha}g)\,dx^{\alpha}$.  		Next, we include the first term of the covariant derivative and end up with
	\begin{align*}
	0&=\underbrace{dg_{\mu\nu}}_{(D_{ {\alpha}}g_{\mu\nu})\,dx^{\alpha}} \otimes dx^\mu\otimes dx^\nu-g_{\mu\nu}\tilde{\Gamma}^\mu_{\alpha\beta} dx^\alpha\otimes dx^\beta \otimes dx^\nu-g_{\mu\nu}\tilde{\Gamma}^\nu_{\alpha\beta}dx^\alpha\otimes dx^\mu\otimes dx^\beta\\
	&+g_{\mu\nu}\Gamma^\nu_{\alpha\beta}([x^\alpha,\Gamma^\mu_{\rho\sigma}]+D^{\mu\alpha}_{\ \ \eta}\Gamma^\eta_{\rho\sigma}-D^{\eta\alpha}_{\ \ \rho}\Gamma^\mu_{\eta\sigma}-D^{\eta\alpha}_{\ \ \sigma}\Gamma^\mu_{\rho\eta})dx^\rho\otimes dx^\sigma\otimes dx^\beta\\
	&-g_{\mu\nu}\Sigma^{\mu\nu}_{\alpha\beta\lambda}dx^\alpha\otimes dx^\lambda\otimes dx^\beta\\
	0&=(D_{ {\alpha}}g_{\mu\nu }-g_{\beta\nu}\tilde{\Gamma}^\beta_{\alpha\mu}-g_{\mu\beta}\tilde{\Gamma}^\beta_{\alpha\nu}-g_{\lambda\beta}\Sigma^{\lambda\beta}_{\alpha\nu\mu})dx^\alpha\otimes dx^\mu\otimes dx^\nu\\
	&+g_{\sigma\beta}\Gamma^\beta_{\rho\nu}([x^\rho,\Gamma^\sigma_{\alpha\mu}]+D^{\sigma\rho}_{\ \ \eta}\Gamma^\eta_{\alpha\mu}-D^{\eta\rho}_{\ \ \alpha}\Gamma^\sigma_{\eta\mu}-D^{\eta\rho}_{\ \ \mu}\Gamma^\sigma_{\alpha\eta})dx^\alpha\otimes dx^\mu\otimes dx^\nu,
	\end{align*}
	where we defined the term $\Sigma^{\lambda\beta}_{\alpha\nu\mu}dx^\mu:=[dx^\lambda,\Gamma^\beta_{\alpha\nu }]$ and 	in the second equality we just grouped terms together. Hence, for the coefficients we end up with 
	\begin{align*}
	0&=D_{ {\alpha}}g_{\mu\nu}-g_{\beta\nu}\tilde{\Gamma}^\beta_{\alpha\mu}-g_{\mu\beta}\tilde{\Gamma}^\beta_{\alpha\nu}-g_{\lambda\beta}\Sigma^{\lambda\beta}_{\alpha\nu\mu}\\
	&\qquad \qquad+g_{\lambda\beta}\Gamma^\beta_{\rho\nu}([x^\rho,\Gamma^\lambda_{\alpha\mu}]+D^{\lambda\rho}_{\ \ \sigma}\Gamma^\sigma_{\alpha\mu}-D^{\sigma\rho}_{\ \ \alpha}\Gamma^\lambda_{\sigma\mu}-D^{\sigma\rho}_{\ \ \mu}\Gamma^\lambda_{\alpha\sigma}).
	\end{align*}
	Before cycling the indices in order to obtain the quantum corrected connection symbol $\tilde{\Gamma}$ let us look at the former term and investigate its symmetries. We rewrite it in the following form,
	\begin{align*}
	- D_{{\alpha}}g_{\mu\nu}+g_{\beta\nu}\tilde{\Gamma}^\beta_{\alpha\mu}+g_{\mu\beta}\tilde{\Gamma}^\beta_{\alpha\nu}&=-g_{\lambda\beta}\Sigma^{\lambda\beta}_{\alpha\nu\mu}\\
	& +g_{\lambda\beta}\Gamma^\beta_{\rho\nu}([x^\rho,\Gamma^\lambda_{\alpha\mu}]+D^{\lambda\rho}_{\ \ \sigma}\Gamma^\sigma_{\alpha\mu}-D^{\sigma\rho}_{\ \ \alpha}\Gamma^\lambda_{\sigma\mu}-D^{\sigma\rho}_{\ \ \mu}\Gamma^\lambda_{\alpha\sigma}).
	\end{align*}
	Since the left-hand side is symmetric w.r.t.\ the indices $\mu$ and $\nu$ we conclude that the $\Sigma$-term has to be given by 
	\begin{align*}
	\Sigma^{\lambda\beta}_{\alpha\nu\mu}&=-\Gamma^\beta_{\rho\mu}([x^\rho,\Gamma^\lambda_{\alpha\nu}]+D^{\lambda\rho}_{\ \ \sigma}\Gamma^\sigma_{\alpha\nu}-D^{\sigma\rho}_{\ \ \alpha}\Gamma^\lambda_{\sigma\nu}-D^{\sigma\rho}_{\ \ \nu}\Gamma^\lambda_{\alpha\sigma}) .
	\end{align*} 
	Hence, metricity finally reads, 
	\begin{align*}
	0&=D_{ {\alpha}}g_{\mu\nu}-g_{\beta\nu}\tilde{\Gamma}^\beta_{\alpha\mu}-g_{\mu\beta}\tilde{\Gamma}^\beta_{\alpha\nu}-g_{\lambda\beta}\Sigma^{\lambda\beta}_{\alpha\nu\mu}-g_{\lambda\beta}\Sigma^{\lambda\beta}_{\alpha\mu\nu}.
	\end{align*}
	By using the result for metricity $(\nabla g =0)$ and cycling indices we have
	\begin{align*}
	0&=D_{ {\alpha}}g_{\mu\nu}-g_{\beta\nu}\tilde{\Gamma}^\beta_{\alpha\mu}-g_{\mu\beta}\tilde{\Gamma}^\beta_{\alpha\nu}-g_{\lambda\beta}\Sigma^{\lambda\beta}_{\alpha\nu\mu}-g_{\lambda\beta}\Sigma^{\lambda\beta}_{\alpha\mu\nu} \\
	&+D_{ {\mu}}g_{\nu\alpha}-g_{\beta\alpha}\tilde{\Gamma}^\beta_{\mu\nu}-g_{\nu\beta}\tilde{\Gamma}^\beta_{\mu\alpha}-g_{\lambda\beta}\Sigma^{\lambda\beta}_{\mu\alpha\nu}-g_{\lambda\beta}\Sigma^{\lambda\beta}_{\mu\nu\alpha} \\
	&-D_{ {\nu}}g_{\alpha\mu}+g_{\beta\mu}\tilde{\Gamma}^\beta_{\nu\alpha}+g_{\alpha\beta}\tilde{\Gamma}^\beta_{\nu\mu}+g_{\lambda\beta}\Sigma^{\lambda\beta}_{\nu\mu\alpha}+g_{\lambda\beta}\Sigma^{\lambda\beta}_{\nu\alpha\mu} ,
	\end{align*}
	which is in direct analogy to the commutative case. By re-ordering the terms we obtain, 
	\begin{align*}
	2g_{\beta\nu} \tilde{\Gamma}^\beta_{\alpha\mu} & 
	=D_{ {\alpha}}g_{\mu\nu }+D_{ {\mu}}g_{\nu\alpha}-D_{  {\nu}}g_{\alpha\mu}-2g_{\lambda\beta}\Sigma^{\lambda\beta}_{\alpha\mu\nu} .
	\end{align*}
	By multiplying  from the left with the inverse metric, this renders 
	\begin{align*}
	\tilde{\Gamma}^{\kappa}_{\alpha\mu} & 
	=\frac{1}{2}g^{\kappa\nu}\left(D_{ {\alpha}}g_{\mu\nu }+D_{ {\mu}}g_{\nu\alpha}-D_{ {\nu}}g_{\alpha\mu}-2g_{\lambda\beta}\Sigma^{\lambda\beta}_{\alpha\mu\nu} \right)\\& 
	=: \overline {\Gamma}_{\alpha\mu}^{\kappa}-g^{\kappa\nu}  \Sigma_{\alpha\mu\nu},
	\end{align*}
	with $ \Sigma_{\alpha\mu\nu}\equiv g_{\lambda\beta}\Sigma^{\lambda\beta}_{\alpha\mu\nu} $. 		
\end{proof} 
\begin{remark}
	In Theorem (\ref{P3.1}) we solved for the $\Sigma$-term by making use of the symmetric structure of the indices on the l.h.s. The reader may object that some other terms besides the counter-terms on the r.h.s. should be taken into account as long as they are symmetric in the necessary indices. However, recalling that the definition for this quantity is $[dx^\lambda,\Gamma^\beta_{\alpha\nu }]=\Sigma^{\lambda\beta}_{\alpha\nu\mu}dx^\mu$, explicitly implies that it is symmetric in $\alpha,\,\nu$. From this we see that up to first order, there is no other possible terms. Moreover, note that $\overline{\Gamma}$ is of first order in the noncommutative parameters due to the fact that the derivations yield terms to this order. However, in the following we write $\Gamma$ instead of $\overline{\Gamma}$ if the term is multiplied by the structure constants $C$ or $S$.
\end{remark}
The following result is needed in order to prove Proposition \ref{prop:cdww}.
\begin{proposition} \label{prop:ccd}
	The commutator of the generators of the algebra $\mathfrak{A}$ and the connection symbol $\overline{\Gamma}$ introduced in Theorem \ref{P3.1}, is given by  
	\begin{align} 
	[x^{\rho}, \overline{\Gamma}_{\alpha\nu}^{\lambda} ]=& -	D^{\lambda\rho}_{\ \ \sigma} {\Gamma}^{\sigma}_{\alpha\nu} 	+D^{\sigma\rho}_{\ \ \alpha} {\Gamma}^{\lambda}_{\sigma\nu}+D^{\sigma\rho}_{\ \ \nu} {\Gamma}^{\lambda}_{\sigma\alpha}\nonumber\\&\qquad\qquad 
	- g^{\lambda\kappa}
	\biggl( S^{\sigma\rho}_{\alpha}D_{\sigma}g_{\kappa\nu}   
	+S^{\sigma\rho}_{\nu}D_{\sigma}g_{\kappa\alpha} -  	S^{\sigma\rho}_{\kappa}
	D_{\sigma}g_{\alpha\nu} 
	\biggr).\label{eq:ccd_eq1}
	\end{align} 
\end{proposition}
	\begin{proof}
	See Appendix \ref{proof:cdww1}.
\end{proof}	
\begin{proposition}\label{prop:cdww}
	Let $M\in\Omega^1(\mathcal{A})\otimes \Omega^1(\mathcal{A})$, the action of the covariant derivative on it, up to first order, is given by
	\begin{align*}
	\nabla_\rho M_{\mu\nu}&=D_{\rho}(M_{\mu\nu})-M_{\lambda\nu}\tilde{\Gamma}^\lambda_{\rho\mu}-M_{\mu\lambda}\tilde{\Gamma}^\lambda_{\rho\nu}
	-M_{\lambda\beta}	\Sigma^{\lambda\beta}_{\mu\rho\nu} - M_{\lambda\beta}	\Sigma^{\lambda\beta}_{\nu\rho\mu}+\mathcal{O}(D^2).
	\end{align*} For $T\in\Omega^1(\mathcal{A})\otimes \Omega^1(\mathcal{A})\otimes \Omega^1(\mathcal{A})$, the action of the covariant derivative up to first order  is given by
	\begin{align*}
	\nabla_{\lambda}(T_{\alpha\sigma\kappa})&=  D_{\lambda}T_{\alpha\sigma\kappa}  
	-
	T_{\rho\sigma\kappa}\tilde{\Gamma}^{\rho}_{\lambda\alpha}
	-T_{\alpha\rho\kappa} \tilde{\Gamma}^{\rho}_{\lambda\sigma}  -T_{\alpha\sigma\rho}\tilde{\Gamma}^{\rho}_{\lambda\kappa}   -T_{\rho\nu\kappa}\Sigma^{\rho\nu}_{\lambda\sigma\alpha} 
	-T_{\rho\sigma\nu}\Sigma^{\rho\nu}_{\lambda\kappa\alpha}\\&-T_{\alpha\rho\nu} \Sigma^{\rho\nu}_{\lambda\kappa\sigma}
	-
	T_{\rho\nu\kappa} \Sigma^{\rho\nu}_{\lambda\alpha\sigma }   -T_{\rho\sigma\nu} \Sigma^{\rho\nu}_{\lambda\alpha\kappa}      
	-T_{\alpha\rho\nu} \Sigma^{\rho\nu}_{\lambda\sigma\kappa} .
	\end{align*}
	\begin{proof}
		See Appendix  \ref{proof:cdww}.
	\end{proof}	
\end{proposition} 
The general formula is the formula for the covariant derivative of the most general Lie-algebraic type of a noncommutative spacetime. If one sets the deformation constants, i.e. the structure constants $D$, equal to zero one obtains the classical case.  
	\section{Riemann Tensor}\label{sec:4}
In this section, we define the Riemann tensor in the context of NCG. It is given as a combination of the exterior derivative $d$, the wedge product $\wedge$, which maps the tensor product of two elements of the algebra of forms to the skew-symmetric product,  and    the covariant derivative. To give a definition of the noncommutative curvature we use the extension of the covariant derivative, see Definition \ref{ext}. 
\begin{defi}\label{curv0}
	Let $dx^\mu\in\Omega^1(\mathfrak{A})$ and $\nabla$ be the connection, following \cite[Proposition 42]{landi_2014} and \cite[Section 4.1]{Mo} we define the \textbf{curvature} as
	\begin{align*}
	R(dx^\mu):=D^2(dx^\mu)=D\circ D(dx^\mu).
	\end{align*}
\end{defi}
Note that in \cite[Section 3.7]{madore_1999} it is mentioned that there is not an appropriate way to define the curvature, since there is no guarantee that taking the commutative definition will provide a meaningful quantity in a noncommutative setting. However, we use the formerly mentioned definitions in order to obtain a concrete formula for the Riemann tensor that allows us to  calculate  noncommutative corrections to various physical objects.   
\begin{proposition} \label{prop:riem} 	Let $dx^\mu\in \Omega^1(\mathfrak{A})$ and $\nabla$ be the connection, then  the \textbf{Riemann tensor} $R:\Omega^1(\mathfrak{A})\to \Omega^2(\mathfrak{A})\otimes \Omega^1(\mathfrak{A})$,   given by Definition \ref{curv0}, is equivalent to the one   in \cite{BM}, i.e.\
	\begin{align} 
	&R(dx^\mu)=(d\otimes\mathbb{I}-(\wedge\otimes\mathbb{I})\circ(\mathbb{I}\otimes\nabla))\circ\nabla(dx^\mu). \label{curv}
	\end{align}
	
	\begin{proof}
		According to Definition \ref{ext} the mapping $D:\Omega^{*}(\mathfrak{A})\otimes_{\mathfrak{A}}\Omega^1(\mathfrak{A})$,  which is  an extension of the notion of a covariant derivative (where on a one form it is by definition equal to the covariant derivative), acts as follows on a two-form 
		\begin{align*}
		D^2(dx^\mu)=D\circ D(dx^\mu)=D\circ\nabla(dx^\mu)&=D(\omega\otimes \omega')\\
		&=d(\omega)\otimes \omega'-\omega\cdot\nabla(\omega')\\
		&=\left(d\otimes \mathbb{I}-(\wedge\otimes\mathbb{I})\circ(\mathbb{I}\otimes\nabla)\right)(\omega\otimes \omega')\\
		&=\left(d\otimes \mathbb{I}-(\wedge\otimes\mathbb{I})\circ(\mathbb{I}\otimes\nabla)\right)\circ\nabla(dx^\mu),
		\end{align*}
		where we used the notation $\nabla(dx^\mu)=\omega\otimes \omega'$.
		
	\end{proof}
\end{proposition}
By using the former definition and the explicit formula for the covariant derivative for a general Lie-algebraic noncommutative spacetime we calculate the Riemann tensor explicitly.  
\begin{proposition} \label{rt} 
	The \textbf{Riemann tensor components}   for the most general Lie-algebraic type of noncommutative spacetime are  explicitly given by
	\begin{equation}
	\tilde{R}^{\mu}_{\;\;\sigma\alpha\rho}=D_{\alpha}\tilde{\Gamma}^{\mu}_{\rho\sigma  }-D_{  \rho}\tilde{\Gamma}^{\mu}_{\alpha\sigma}+\tilde{\Gamma}^{\mu}_{\alpha\lambda}\tilde{\Gamma}^\lambda_{\rho\sigma}-\tilde{\Gamma}^{\mu}_{\rho\lambda}\tilde{\Gamma}^\lambda_{\alpha\sigma} \label{riemann1}.
	\end{equation}   
	\begin{proof}  
		The action of the curvature upon the coordinated basis of one-forms is given in terms of its defining Equation (\ref{curv}) 
		\begin{align*}
		R(dx^\mu)=&(d\otimes\mathbb{I}-(\wedge\otimes\mathbb{I})\circ(\mathbb{I}\otimes\nabla))\nabla(dx^\mu),
		\end{align*}
		by inserting the equation $\nabla(dx^\mu)=-\tilde{\Gamma}^{\mu}_{\rho\sigma}dx^\rho\otimes dx^\sigma$ into    the definition of the Riemann tensor one has
		\begin{align*}
		R(dx^\mu)=&-(d\otimes\mathbb{I}-(\wedge\otimes\mathbb{I})\circ(\mathbb{I}\otimes\nabla))(\tilde{\Gamma}^{\mu}_{\rho\sigma}dx^\rho\otimes dx^\sigma)\\
		=&-d(\tilde{\Gamma}^{\mu}_{\rho\sigma}dx^\rho)\otimes dx^\sigma+(\tilde{\Gamma}^{\mu}_{\rho\sigma}dx^\rho\wedge\nabla(dx^\sigma))\\
		=&-d(\tilde{\Gamma}^{\mu}_{\rho\sigma})\wedge dx^\rho\otimes dx^\sigma-\tilde{\Gamma}^{\mu}_{\rho\sigma}dx^\rho\wedge\tilde{\Gamma}^\sigma_{\alpha\beta}dx^\alpha \otimes dx^\beta.
		\end{align*}
		Since we obtain  an entity that lies in $\Omega^2({\mathfrak A} )\otimes\Omega^1({\mathfrak A} )$, we  move all elements of the algebra to the left, c.f. \cite[Chapter 7, Equation (7.12)]{landi_2014}. In order to proceed we  commute the Christoffel symbol with an element of the basis of one-forms.
		Given that our calculation is up to first order, we  only consider the classical part of the Christoffel symbol multiplying the commutator, 
		\begin{align*}
		R(dx^\mu)=&-d\tilde{\Gamma}^{\mu}_{\rho\sigma}\wedge dx^\rho\otimes dx^\sigma-\tilde{\Gamma}^{\mu}_{\rho\sigma}\tilde{\Gamma}^\sigma_{\alpha\beta}dx^\rho\wedge dx^\alpha \otimes dx^\beta .
		\end{align*}
		By rearranging the indices   we obtain
		\begin{align*}
		R(dx^\mu)=&-(D_{\lambda}\tilde{\Gamma}^{\mu}_{\alpha\beta}+\tilde{\Gamma}^{\mu}_{\lambda\sigma}\tilde{\Gamma}^\sigma_{\alpha\beta})dx^\lambda\wedge dx^\alpha \otimes dx^\beta \\
		=&-(D_{\lambda}\tilde{\Gamma}^{\mu}_{\alpha\beta }+\tilde{\Gamma}^{\mu}_{\lambda\sigma}\tilde{\Gamma}^\sigma_{\alpha\beta} -D_{\alpha}\tilde{\Gamma}^{\mu}_{\lambda\beta }-\tilde{\Gamma}^{\mu}_{\alpha\sigma}\tilde{\Gamma}^\sigma_{\lambda\beta})dx^\lambda\otimes dx^\alpha \otimes dx^\beta.
		\end{align*}
	\end{proof}
\end{proposition}  
Next, we   prove   identities and symmetries  of the noncommutative Riemannn tensor that are essential for the covariance of the noncommutative Einstein equations. 
\begin{theorem} \label{prop:bianchi1}	The \textbf{first Bianchi identity}  reads
	\begin{align*}
	\tilde{R}^{\mu}_{\;\;[\sigma\alpha\rho]}
	&=0.
	\end{align*}
\end{theorem}
\begin{proof}
	The proof is straightforward  
	\begin{align*}
	\tilde{R}^{\mu}_{\;\;[\sigma\alpha\rho]}&=\frac{1}{3}\left(\tilde{R}^{\mu}_{\;\;\sigma\alpha\rho}+\tilde{R}^{\mu}_{\;\;\alpha\rho\sigma}+\tilde{R}^{\mu}_{\;\;\rho\sigma\alpha}\right)\\
	&=\frac{1}{3}\left(D_{\alpha}\tilde{\Gamma}^{\mu}_{\rho\sigma}-D_{\rho}\tilde{\Gamma}^{\mu}_{\alpha\sigma}+\tilde{\Gamma}^{\mu}_{\alpha\nu}\tilde{\Gamma}^\nu_{\rho\sigma}-\tilde{\Gamma}^{\mu}_{\rho\nu}\tilde{\Gamma}^\nu_{\alpha\sigma}  \right)
	\\
	&+\frac{1}{3}\left(D_{\rho}\tilde{\Gamma}^{\mu}_{\sigma\alpha}-D_{\sigma}\tilde{\Gamma}^{\mu}_{\rho\alpha}+\tilde{\Gamma}^{\mu}_{\rho\nu}\tilde{\Gamma}^\nu_{\sigma\alpha}-\tilde{\Gamma}^{\mu}_{\sigma\nu}\tilde{\Gamma}^\nu_{\rho\alpha}  \right) 
	\\
	&+\frac{1}{3}\left(D_{\sigma}\tilde{\Gamma}^{\mu}_{\alpha\rho}-D_{\alpha}\tilde{\Gamma}^{\mu}_{\sigma\rho}+\tilde{\Gamma}^{\mu}_{\sigma\nu}\tilde{\Gamma}^\nu_{\alpha\rho}-\tilde{\Gamma}^{\mu}_{\alpha\nu}\tilde{\Gamma}^\nu_{\sigma\rho}   \right)\\
	&=0.
	\end{align*}

\end{proof}
To prove the second Bianchi identity we write the Riemann tensor in a convenient form and prove the following intermediate result. 
\begin{proposition}\label{prop:comm1}
	The commutator of covariant derivatives acting on arbitrary element of the algebra is
	\begin{align*}
	[\nabla_\mu,\nabla_\nu]A_\lambda
	&=A_\tau \tilde{R}^\tau_{\lambda\nu\mu}+(\nabla_\tau A_\kappa)U^{\tau\kappa}_{\lambda\nu\mu}	,
	\end{align*}
	where 	$U^{\tau\kappa}_{\lambda\nu\mu}:=(\Sigma^{\tau\kappa}_{\lambda\nu\mu}-\Sigma^{\tau\kappa}_{\lambda\mu\nu})$, is skew-symmetric in $\mu$ and $\nu$.
	\begin{proof}
		Let us analyse the action of the commutator of the covariant derivatives on an arbitrary element of the algebra
		\begin{align*}
		[\nabla_\mu,\nabla_\nu]A_\lambda&=\nabla_\mu(\nabla_\nu A_\lambda)-(\mu \leftrightarrow \nu)\\
		&=D_{\mu}(K_{\nu\lambda})-K_{\tau\lambda}\tilde{\Gamma}^\tau_{\mu\nu}-K_{\nu\tau}\tilde{\Gamma}^\tau_{\mu\lambda}-K_{\eta\gamma}\Sigma^{\eta\gamma}_{\mu\nu\lambda}\\
		&- K_{\tau\kappa}\Sigma^{\tau\kappa}_{\lambda\mu\nu} -(\mu \leftrightarrow \nu)\\
		&=D_{\mu}(K_{\nu\lambda})-K_{\nu\tau}\tilde{\Gamma}^\tau_{\mu\lambda}
		- K_{\tau\kappa}\Sigma^{\tau\kappa}_{\lambda\mu\nu}-(\mu \leftrightarrow \nu)
		\end{align*}
		where $K_{\nu\lambda}:=\nabla_\nu A_\lambda$ and we have eliminated all the terms that are symmetric in $\mu$ and $\nu$. The first term is 
		\begin{align*}
		D_{\mu}(K_{\nu\lambda})=D_{\mu}(D_\nu A_\lambda-A_\tau\tilde{\Gamma}^\tau_{\nu\lambda})=D_{\mu}D_\nu A_\lambda-(D_{\mu}A_\tau)\tilde{\Gamma}^\tau_{\nu\lambda}   -A_\tau (D_{\mu}\tilde{\Gamma}^\tau_{\nu\lambda})
		\end{align*}
		we drop the terms symmetric in $\mu$ and $\nu$ to get
		\begin{align*}
		[\nabla_\mu,\nabla_\nu]A_\lambda
		&=-(D_{\mu}A_\tau)\tilde{\Gamma}^\tau_{\nu\lambda}   -A_\tau (D_{\mu}\tilde{\Gamma}^\tau_{\nu\lambda})-(\nabla_\nu A_\tau)\tilde{\Gamma}^\tau_{\mu\lambda} - K_{\tau\kappa}\Sigma^{\tau\kappa}_{\lambda\mu\nu}\\
		&-(\mu \leftrightarrow \nu)\\
		&=(D_{\nu}A_\tau)\tilde{\Gamma}^\tau_{\mu\lambda}-(D_{\mu}A_\tau)\tilde{\Gamma}^\tau_{\nu\lambda}+A_\tau (D_{\nu}\tilde{\Gamma}^\tau_{\mu\lambda})   -A_\tau (D_{\mu}\tilde{\Gamma}^\tau_{\nu\lambda})\\
		&+(\nabla_\mu A_\tau)\tilde{\Gamma}^\tau_{\nu\lambda}-(\nabla_\nu A_\tau)\tilde{\Gamma}^\tau_{\mu\lambda}+K_{\tau\kappa}\Sigma^{\tau\kappa}_{\lambda\nu\mu}-K_{\tau\kappa}\Sigma^{\tau\kappa}_{\lambda\mu\nu},
		\end{align*}
		where so far we have just written explicitly all the terms involved in the calculation. Next, we expand the covariant derivatives on $A$ which contain terms that   cancel the first two terms and   lead us to a quantity that is the Riemann tensor and an additional term, i.e.\
		\begin{align*}
		[\nabla_\mu,\nabla_\nu]A_\lambda&=A_\tau (D_{\nu}\tilde{\Gamma}^\tau_{\mu\lambda})   -A_\tau (D_{\mu}\tilde{\Gamma}^\tau_{\nu\lambda})-A_\tau\tilde{\Gamma}^\tau_{\mu\gamma}\tilde{\Gamma}^\gamma_{\nu\lambda}+A_\tau\tilde{\Gamma}^\tau_{\nu\gamma}\tilde{\Gamma}^\gamma_{\mu\lambda}\\
		&+ (D_\tau A_\kappa-A_\theta\tilde{\Gamma}^\theta_{\tau\kappa}) (\Sigma^{\tau\kappa}_{\lambda\nu\mu}-\Sigma^{\tau\kappa}_{\lambda\mu\nu})  \\
		&:=A_\tau \tilde{R}^\tau_{\lambda\nu\mu}+(\nabla_\tau A_\kappa)U^{\tau\kappa}_{\lambda\nu\mu}.		
		\end{align*}
	\end{proof}
\end{proposition}	

\begin{theorem}\label{prop:secbi}
	The \textbf{second Bianchi identity} is given by 
	\begin{align*}	\tilde{R}^\tau_{\lambda[\nu\mu;\beta]}=0,  \qquad U^{\tau\gamma}_{\lambda[\nu\mu;\beta]}=0.
	\end{align*}
	
	\begin{proof}
		We know from  the Jacobi identities that  cyclic permutation of  covariant derivatives vanish, i.e.\ 
		\begin{align*}
		0=[\nabla_{[\beta},[\nabla_\mu,\nabla_{\nu]}]].
		\end{align*}
		The former expression applied on an element of the algebra gives us 
		\begin{align*}
		0&=	[\nabla_{\beta},[\nabla_{\mu},\nabla_{\nu}]]A_{\lambda}
		+	[\nabla_{\mu},[\nabla_{\nu},\nabla_{\beta}]]A_{\lambda}+
		[\nabla_{\nu},[\nabla_{\beta},\nabla_{\mu}]]A_{\lambda}
		\\&= \nabla_{\beta}\left([\nabla_{\mu},\nabla_{\nu}]A_{\lambda}\right)
		+\nabla_{\beta}\left([\nabla_{\nu},\nabla_{\mu}]A_{\lambda}\right)
		\\&+\nabla_{\mu}\left([\nabla_{\nu},\nabla_{\beta}]A_{\lambda}\right)
		+\nabla_{\mu}\left([\nabla_{\beta},\nabla_{\nu}]A_{\lambda}\right)\\&+\nabla_{\nu}\left([\nabla_{\mu},\nabla_{\beta}]A_{\lambda}\right)
		+\nabla_{\nu}\left([\nabla_{\beta},\nabla_{\mu}]A_{\lambda}\right),
		\end{align*}
		where next we use Proposition (\ref{prop:comm1}) to write down the commutators
		\begin{align*}
		0&= 
		\nabla_{\beta}\left(A_\tau \tilde{R}^\tau_{\lambda\nu\mu}+  U^{\tau\gamma}_{\lambda\nu\mu} \nabla_\tau A_\gamma \right)
		+\nabla_{\beta}\left(A_\tau \tilde{R}^\tau_{\lambda\mu\nu}+  U^{\tau\gamma}_{\lambda\mu\nu} \nabla_\tau A_\gamma  \right)
		\\&+\nabla_{\mu}\left(A_\tau \tilde{R}^\tau_{\lambda\beta\nu}+  U^{\tau\gamma}_{\lambda\beta\nu} \nabla_\tau A_\gamma  \right)
		+\nabla_{\mu}\left(A_\tau \tilde{R}^\tau_{\lambda\nu\beta}+  U^{\tau\gamma}_{\lambda\nu\beta} \nabla_\tau A_\gamma \right)\\&+\nabla_{\nu}\left(A_\tau \tilde{R}^\tau_{\lambda\beta\mu}+  U^{\tau\gamma}_{\lambda\beta\mu} \nabla_\tau A_\gamma \right)
		+\nabla_{\nu}\left(A_\tau \tilde{R}^\tau_{\lambda\mu\beta}+  U^{\tau\gamma}_{\lambda\mu\beta} \nabla_\tau A_\gamma  \right).
		\end{align*}
		By  regrouping   terms that appear due to the covariant derivatives acting on $\tilde{R}$ and by reordering  those terms where the derivatives are on $A$, we have
		\begin{align*}
		0&= 
		A_\tau\tilde{R}^\tau_{\lambda[\nu\mu;\beta]}
		+
		\nabla_{\beta}A_\tau\left( \tilde{R}^\tau_{\lambda\nu\mu} +\tilde{R}^\tau_{\lambda\mu\nu} \right) 
		+
		\nabla_{\mu}A_\tau\left( \tilde{R}^\tau_{\lambda\beta\nu} +\tilde{R}^\tau_{\lambda\nu\beta} \right) 
		\\&+
		\nabla_{\nu}A_\tau\left( \tilde{R}^\tau_{\lambda\beta\mu} +\tilde{R}^\tau_{\lambda\mu\beta} \right) +
		\left(U^{\tau\gamma}_{\lambda\nu\mu}+U^{\tau\gamma}_{\lambda\mu\nu} \right)	\nabla_{\beta}\nabla_\tau A_\gamma+
		\left(U^{\tau\gamma}_{\lambda\beta\nu}+U^{\tau\gamma}_{\lambda\nu\beta } \right)	\nabla_{\mu}\nabla_\tau A_\gamma\\&+
		\left(U^{\tau\gamma}_{\lambda\beta\mu}+U^{\tau\gamma}_{\lambda\mu\beta} \right)	\nabla_{\nu}\nabla_\tau A_\gamma
		+
		U^{\tau\gamma}_{\lambda[\nu\mu;\beta]}
		\nabla_\tau A_\gamma
		\\&= 	A_\tau\tilde{R}^\tau_{\lambda[\nu\mu;\beta]}+U^{\tau\gamma}_{\lambda[\nu\mu;\beta]}
		\nabla_\tau A_\gamma   ,
		\end{align*}	
		where in the last lines we used the properties of the   (two) tensors $R$ and $U$. Since the curvature term does not contain any derivatives the two terms on the right-hand side of the equation have to be  separately zero.  
	\end{proof}
\end{theorem}
The next result is important in order to ease the upcoming proofs and calculations. 
\begin{proposition} \label{prop:44}
	The general noncommutative Riemann tensor (c.f. Proposition \ref{prop:riem})  rewritten in terms of barred Riemann tensor quantities and $\Sigma$ terms reads as follows
	\begin{align*}
	\tilde{R}^{\mu}_{\;\;\beta\alpha\lambda}&=\overline{R}^{\mu}_{\;\;\beta\alpha\lambda}+g^{\mu\nu}T_{\beta\alpha\lambda\nu},
	\end{align*}
	where we introduced $T_{\beta\alpha\lambda\nu}:=\nabla_{\lambda}  \Sigma_{\alpha\beta\nu} -\nabla_{\alpha}  \Sigma_{\lambda\beta\nu}$.
	\begin{proof}
		We calculate directly by substituting the barred connection symbol from Theorem \ref{P3.1} into Proposition \ref{prop:riem},
		\begin{align*}
		\tilde{R}^{\mu}_{\;\;\beta\alpha\lambda}&=D_{\alpha} \tilde{\Gamma}^{\mu}_{\lambda\beta}   +\tilde{\Gamma}^{\mu}_{\alpha\sigma}\tilde{\Gamma}^\sigma_{\lambda\beta}- (\alpha\leftrightarrow\lambda)\\
		&=D_{\alpha}(\overline{\Gamma}^{\mu}_{\lambda\beta}-g^{\mu\nu}  \Sigma_{\lambda\beta\nu}) 
		+(\overline{\Gamma}^{\mu}_{\alpha\sigma}-g^{\mu\nu}  \Sigma_{\alpha\sigma\nu})(\overline{\Gamma}^\sigma_{\lambda\beta}-g^{\sigma\kappa}  \Sigma_{\lambda\beta\kappa})
		- (\alpha\leftrightarrow\lambda),
		\end{align*}
		where we have made use of the definitions contained within Theorem \ref{P3.1} to write $\tilde{\Gamma}$ in terms of $\overline{\Gamma}$, 
		\begin{align*}
		\tilde{R}^{\mu}_{\;\;\beta\alpha\lambda}&=D_{\alpha} \overline{\Gamma}^{\mu}_{\lambda\beta}-D_{\alpha} (g^{\mu\nu}  \Sigma_{\lambda\beta\nu})  
		+\overline{\Gamma}^{\mu}_{\alpha\sigma}\overline{\Gamma}^\sigma_{\lambda\beta}-g^{\mu\nu} \overline{\Gamma}^\sigma_{\lambda\beta} \Sigma_{\alpha\sigma\nu}
		-g^{\sigma\nu}  \overline{\Gamma}^{\mu}_{\alpha\sigma}\Sigma_{\lambda\beta\nu} 
		- (\alpha\leftrightarrow\lambda),
		\end{align*}
		by reordering some terms and using the $\overline{R}$-term we have,
		\begin{align*}
		&=	\overline{R}^{\mu}_{\;\;\beta\alpha\lambda}-\biggl(g^{\mu\nu}D_{\alpha}  \Sigma_{\lambda\beta\nu}  
		- g^{\mu\nu}  {\Gamma}^\sigma_{\alpha\beta} \Sigma_{\lambda\sigma\nu}-  g^{\mu\nu}  {\Gamma}^\sigma_{\alpha\nu} \Sigma_{\lambda\beta\sigma} - (\alpha\leftrightarrow\lambda) \biggr) \\&=	\overline{R}^{\mu}_{\;\;\beta\alpha\lambda}-g^{\mu\nu}\biggl(\nabla_{\alpha}  \Sigma_{\lambda\beta\nu}  -\nabla_{\lambda}  \Sigma_{\alpha\beta\nu}   \biggr) \\&=	\overline{R}^{\mu}_{\;\;\beta\alpha\lambda}+g^{\mu\nu}\biggl( \nabla_{\lambda}  \Sigma_{\alpha\beta\nu} -\nabla_{\alpha}  \Sigma_{\lambda\beta\nu}   \biggr)\\
		\tilde{R}^{\mu}_{\;\;\beta\alpha\lambda}&=:	\overline{R}^{\mu}_{\;\;\beta\alpha\lambda}+g^{\mu\nu}T_{\beta\alpha\lambda\nu}
		\end{align*}
	\end{proof}
\end{proposition} 	 
Next, we use the  noncommutative version of the   Bianchi identities   and   the  intermediate results to continue our investigations of  symmetries of the noncommutative curvature tensor. This is important for the upcoming proof  of  the covariance of the Einstein tensor. We follow here the proof of \cite{Wald:1984rg}, i.e. we calculate the commutator of two covariant derivatives on the metric. 
\begin{proposition} \label{prop:riemsym}
	The noncommutative Riemann tensor $\tilde{R}$ satisfies	the following symmetries
	\begin{align*}	
	\tilde{R}_{\sigma\kappa\alpha\lambda}		=-\tilde{R}_{\kappa\sigma\alpha\lambda}  	-K_{\kappa\sigma\alpha\lambda},  
	\end{align*}
	where we defined 
	\begin{align*}	
	K_{\kappa\sigma\alpha\lambda}:&=   \nabla_{\alpha} \Sigma_{\lambda\kappa\sigma} +  \nabla_{\alpha} \Sigma_{\lambda\sigma\kappa}
	-\nabla_{\lambda} \Sigma_{\alpha\kappa\sigma} 
	- \nabla_{\lambda} \Sigma_{\alpha\sigma\kappa}.
	\end{align*} 
	Moreover, the barred Riemann tensor $\overline{R}$  satisfies the following symmetry 
	\begin{align*}	
	\overline{R}_{\kappa\sigma\alpha\lambda}=-
	\overline{R}_{\sigma\kappa\alpha\lambda}.
	\end{align*} 	 
	\begin{proof}
		See Appendix \ref{proof:riemsym}.
	\end{proof}
\end{proposition}

\begin{proposition} 
	The barred Riemann tensor obeys the following  symmetry
	$$\overline{R}_{\mu\alpha\lambda\beta} =\overline{R}_{\lambda\beta\mu\alpha}.$$
	\begin{proof}
		We  study the difference between the following Riemann tensors
		\begin{align*}
		\overline{R}_{\mu\beta\alpha\lambda}-\overline{R}_{\alpha\lambda\mu\beta}&=g_{\mu\sigma}\overline{R}^{\sigma}_{\;\;\beta\alpha\lambda}-g_{\alpha\sigma}\overline{R}^{\sigma}_{\;\;\lambda\mu\beta}=g_{\mu\sigma}\overline{R}^{\sigma}_{\;\;\beta\alpha\lambda}-g_{\alpha\sigma}(-\overline{R}^{\sigma}_{\;\;\beta\lambda\mu}-\overline{R}^{\sigma}_{\;\;\mu\beta\lambda})\\
		&=g_{\mu\sigma}\overline{R}^{\sigma}_{\;\;\beta\alpha\lambda}+g_{\alpha\sigma}\overline{R}^{\sigma}_{\;\;\beta\lambda\mu}+g_{\alpha\sigma}\overline{R}^{\sigma}_{\;\;\mu\beta\lambda}=\overline{R}_{\mu\beta\alpha\lambda}+\overline{R}_{\alpha\beta\lambda\mu}+\overline{R}_{\alpha\mu\beta\lambda},
		\end{align*}
		where in the last lines we used the \textit{first Bianchi identity} from Proposition \ref{prop:bianchi1}. Moreover, in Proposition \ref{prop:riemsym} we  proved that the Riemann tensor is skew-symmetric in the  first two indices, a fact we   make use of for the last two terms to obtain
		\begin{align*}
		\overline{R}_{\mu\beta\alpha\lambda}-\overline{R}_{\alpha\lambda\mu\beta}&=\overline{R}_{\mu\beta\alpha\lambda}-\overline{R}_{\beta\alpha\lambda\mu}-\overline{R}_{\mu\alpha\beta\lambda},
		\end{align*}
		once more, we make use of the \textit{first Bianchi indentity} for the second and the third term to obtain
		\begin{align*}
		\overline{R}_{\mu\beta\alpha\lambda}-\overline{R}_{\alpha\lambda\mu\beta}&=\overline{R}_{\mu\beta\alpha\lambda}-(-\overline{R}_{\beta\mu\alpha\lambda}-\overline{R}_{\beta\lambda\mu\alpha})-(-\overline{R}_{\mu\lambda\alpha\beta}-\overline{R}_{\mu\beta\lambda\alpha})\\
		&=\overline{R}_{\mu\beta\alpha\lambda}+\overline{R}_{\beta\mu\alpha\lambda}+\overline{R}_{\beta\lambda\mu\alpha}+\overline{R}_{\mu\lambda\alpha\beta}+\overline{R}_{\mu\beta\lambda\alpha}\\
		&=\overline{R}_{\beta\lambda\mu\alpha}+\overline{R}_{\mu\lambda\alpha\beta}-\overline{R}_{\mu\beta\alpha\lambda},
		\end{align*}
		where in the transition from the second to the third line the first two terms cancelled each other and the fifth acquires the sign under exchange of its last two indices due to the fact that the Riemann tensor is also skew-symmetric on those too, by its very definition. Using skew-symmetry in the last two indices for the second term and then the \textit{first Bianchi identity} along with the third yields
		\begin{align*}
		\overline{R}_{\mu\beta\alpha\lambda}-\overline{R}_{\alpha\lambda\mu\beta}&=\overline{R}_{\beta\lambda\mu\alpha}-\overline{R}_{\mu\lambda\beta\alpha}-\overline{R}_{\mu\beta\alpha\lambda}\\
		&=\overline{R}_{\beta\lambda\mu\alpha}+\overline{R}_{\mu\alpha\lambda\beta}\\
		&=-\overline{R}_{\beta\lambda\alpha\mu}+\overline{R}_{\mu\alpha\lambda\beta}.
		\end{align*}
		Using this last result but interchanging $\mu\leftrightarrow\beta$ and $\alpha \leftrightarrow\lambda$ leads to
		\begin{align*}
		\overline{R}_{\beta\mu\lambda\alpha}-\overline{R}_{\lambda\alpha\beta\mu}
		&=-\overline{R}_{\mu\alpha\lambda\beta}+\overline{R}_{\beta\lambda\alpha\mu}\\
		\overline{R}_{\mu\beta\alpha\lambda}-\overline{R}_{\alpha\lambda\mu\beta}
		&=-\overline{R}_{\mu\alpha\lambda\beta}+\overline{R}_{\beta\lambda\alpha\mu},
		\end{align*}
		where in the l.h.s of the second line we have used skew-symmetry on both pair of indices, comparing its r.h.s to the one we had before we conclude
		\begin{align*}
		0&=-\overline{R}_{\mu\alpha\lambda\beta}+\overline{R}_{\beta\lambda\alpha\mu}\\
		\overline{R}_{\mu\alpha\lambda\beta}&=\overline{R}_{\lambda\beta\mu\alpha},
		\end{align*}
		we observe that in the r.h.s. we used skew-symmetry twice.
	\end{proof}
\end{proposition}
\section{The Noncommutative Einstein Tensor}
The core of general relativity is manifested in the Einstein field equations that are composed of a stress-energy momentum tensor and the Einstein tensor. Conditions imposed on the Einstein tensor, i.e. symmetry and divergenceless are important for various physical reasons, see \cite{Wald:1984rg}. In particular, these conditions stem from the properties of the energy-momentum tensor. In the following, we assume that for noncommutative spacetimes the afore-mentioned properties of the energy-momentum tensor hold as well. In particular, for this section we use the formerly constructed covariant derivatives and connections, for a Lie-algebraic spacetime, in order to construct the  (noncommutative version of) the Einstein tensor. By requiring the tensor to be divergenceless and by using the Bianchi identities and some of the curvature tensor symmetries (see   Section \ref{sec:4}) we  obtain a unique noncommutative version of the Einstein tensor. 
\begin{proposition} \label{prop:ricsym}
	The  barred Ricci tensor is symmetric, i.e.
	\begin{align*}
	\overline{R}_{\mu\nu}=\overline{R}_{\nu\mu}.
	\end{align*}
	\begin{proof}
		Take the \textit{first Bianchi identity}, see Proposition \ref{prop:bianchi1}, and trace over the first and third indices
		\begin{align*}
		0&=\overline{R}^\lambda_{\;\;[\mu\lambda\nu]}=\overline{R}^\lambda_{\;\;\mu\lambda\nu}+\overline{R}^\lambda_{\;\;\nu\mu\lambda}+\overline{R}^\lambda_{\;\;\lambda\nu\mu}=\overline{R}_{\mu\nu}-\overline{R}_{\nu\mu}+g^{\lambda\alpha}\overline{R}_{\alpha\lambda\nu\mu}=\overline{R}_{\mu\nu}-\overline{R}_{\nu\mu},
		\end{align*}
		we have used the fact that the Riemann tensor is skew-symmetric in its two first indices.
	\end{proof}
\end{proposition}	
\begin{theorem}
	The divergenceless  and symmetric \textbf{quantum    Einstein tensor} is given by \begin{align}
	G_{\sigma\lambda}=\tilde{R}_{\sigma\lambda}   
	- \frac{1}{2}g_{\sigma\lambda} \tilde{R}+g_{\sigma\lambda}\,\Lambda 
	+\frac{1}{2}	{K}_{ \sigma\lambda},
	\end{align}
	where $\Lambda\in\mathbb{R}$ and we   defined $K_{\sigma\lambda}:=g^{\kappa\alpha}K_{\kappa\sigma\alpha\lambda}$.  
\begin{proof} 
The second Bianchi identity is important for general relativity because of its contractions. First, let us  expand
the identity and contract $\mu$ and $\alpha$ 
	\begin{align*}0&=
\nabla_{\kappa} 	\tilde{R}^{\mu}_{\;\;\beta\alpha\lambda} 
+\nabla_{\lambda} 	\tilde{R}^{\mu}_{\;\;\beta\kappa\alpha} +\nabla_{\alpha} 	\tilde{R}^{\mu}_{\;\;\beta\lambda\kappa}  \\&=
\nabla_{\kappa} 	\tilde{R}^{\alpha}_{\;\;\beta\alpha\lambda} 
+\nabla_{\lambda} 	\tilde{R}^{\alpha}_{\;\;\beta\kappa\alpha} +\nabla_{\alpha} 	\tilde{R}^{\alpha}_{\;\;\beta\lambda\kappa}. 
\end{align*}  
Using the definition of the Ricci tensor and the skew-symmetry of the Riemann tensor we have
	\begin{align*}0& =
\nabla_{\kappa} 	\tilde{R}_{\beta\lambda}  +\nabla_{\alpha} 	\tilde{R}^{\alpha}_{\;\;\beta\lambda\kappa} 
-\nabla_{\lambda} 	\tilde{R}_{\beta\kappa}
\end{align*} 
Next, we contract on $\beta$ and $\kappa$ using the metric 
\begin{align*}0& =g^{\beta\kappa}
\nabla_{\kappa} 	\tilde{R}_{\beta\lambda}  +g^{\beta\kappa}\nabla_{\alpha} 	\tilde{R}^{\alpha}_{\;\;\beta\lambda\kappa} 
-g^{\beta\kappa}\nabla_{\lambda} 	\tilde{R}_{\beta\kappa}
\\&= 
\nabla_{\kappa} (g^{\beta\kappa}	\tilde{R}_{\beta\lambda} ) +g^{\beta\kappa}\nabla_{\alpha} 	\tilde{R}^{\alpha}_{\;\;\beta\lambda\kappa} 
-\nabla_{\lambda} 	(g^{\beta\kappa}\tilde{R}_{\beta\kappa})
\\&= 
\nabla_{\kappa} ( 	\tilde{R}^{\kappa}_{\;\;\lambda} ) +\nabla_{\alpha} (	g^{\beta\kappa}\tilde{R}^{\alpha}_{\;\;\beta\lambda\kappa} )
-\nabla_{\lambda} \tilde{R}.
\end{align*}
We  calculate the second term next by  using the symmetries of the last proposition,
\begin{align*} 
\nabla_{\alpha}(	g^{\beta\kappa}\tilde{R}^{\alpha}_{\;\;\beta\lambda\kappa} )&=
\nabla^{\alpha}(	g^{\beta\kappa}\tilde{R}_{\alpha\beta\lambda\kappa} )\\&
= -
\nabla^{\alpha}(	g^{\beta\kappa}\tilde{R}_{\alpha\beta\kappa\lambda} )\\&
= -
\nabla^{\sigma}(	g^{\kappa\alpha}\tilde{R}_{\sigma\kappa\alpha\lambda} )\\&
= -
\nabla^{\sigma}(	g^{\kappa\alpha}( -\tilde{R}_{\kappa\sigma\alpha\lambda}
-K_{\kappa\sigma\alpha\lambda} ) )\\&
=
\nabla^{\sigma}(	g^{\kappa\alpha}  \tilde{R}_{\kappa\sigma\alpha\lambda})
+
\nabla^{\sigma}(	g^{\kappa\alpha}K_{\kappa\sigma\alpha\lambda}  ).
\end{align*} 
By reinserting it into the second Bianchi identity 
and by using the definition of $K_{\sigma\lambda}$
we have 
	\begin{align*}0&=
\nabla_{\kappa} ( 	\tilde{R}^{\kappa}_{\;\;\lambda} ) +\nabla_{\kappa}(	   \tilde{R}^{ \kappa}_{\;\;\lambda})
+
\nabla_{\kappa}(	  {K}^{ \kappa}_{\;\;\lambda} )
-\nabla_{\lambda} \tilde{R}\\&=2\nabla_{\kappa} ( 	\tilde{R}^{\kappa}_{\;\;\lambda} )  
+
\nabla_{\kappa}(	  {K}^{ \kappa}_{\;\;\lambda} )
-\nabla_{\lambda} \tilde{R}
\\&=\nabla_{\kappa} \left(  2	\tilde{R}^{\kappa}_{\;\;\lambda}   
+
 	  {K}^{ \kappa}_{\;\;\lambda}  
- \delta^{\kappa}_{\lambda} \tilde{R}\right),
\end{align*}
by raising index we have  
	\begin{align*}0&=
	\nabla^{\sigma} \left(   	\tilde{R}_{\sigma\lambda}   
		- \frac{1}{2}g_{\sigma\lambda} \tilde{R}
+\frac{1}{2}	{K}_{ \sigma\lambda}  
\right), 
\end{align*}   
However, as in the commutative case, the additional term of a constant multiplied by the metric  can be added without changing symmetry or divergencelessness. Next, we prove that the Einstein tensor is symmetric, i.e. $G_{\sigma\lambda}=G_{\lambda\sigma}$. The term with the curvature scalar and the metric is obviously symmetric. Hence, we concentrate only on the term 	\begin{align*}\tilde{R}_{\sigma\lambda}    
+\frac{1}{2}	{K}_{ \sigma\lambda}&=  
	\overline{R}_{\sigma \lambda}+g^{\alpha\kappa}\biggl( \nabla_{\lambda}  \Sigma_{\alpha\sigma\kappa} -\nabla_{\alpha}  \Sigma_{\lambda\sigma\kappa}   \biggr)
	\\&+\frac{1}{2}g^{\alpha\kappa}\biggl(\nabla_{\alpha} \Sigma_{\lambda\kappa\sigma} +  \nabla_{\alpha} \Sigma_{\lambda\sigma\kappa}
	-\nabla_{\lambda} \Sigma_{\alpha\kappa\sigma} 
	- \nabla_{\lambda} \Sigma_{\alpha\sigma\kappa} \biggr)
	\\&=
	\overline{R}_{\sigma \lambda} +\frac{1}{2}g^{\alpha\kappa}\biggl(\nabla_{\alpha} \Sigma_{\lambda\kappa\sigma} -  \nabla_{\alpha} \Sigma_{\lambda\sigma\kappa}	+ \nabla_{\lambda} \Sigma_{\alpha\sigma\kappa} 
	-\nabla_{\lambda} \Sigma_{\alpha\kappa\sigma} \biggr)	\\&=
	\overline{R}_{\sigma \lambda} +\frac{1}{2}g^{\alpha\kappa}\biggl(\nabla_{\alpha} U_{\lambda\kappa\sigma}  	+ \nabla_{\lambda} U_{\alpha\sigma\kappa}
\biggr)\\&=
\overline{R}_{\sigma \lambda} +\frac{1}{2}g^{\alpha\kappa}\biggl(\nabla_{\alpha} U_{\lambda\kappa\sigma}  	- \nabla_{\lambda} U_{\alpha\kappa\sigma}
\biggr)\\&=:
\overline{R}_{\sigma \lambda} +\frac{1}{2}L_{\sigma \lambda},
\end{align*}  
where in the last lines we used the split property, 
$$\tilde{R}_{\sigma \lambda}=	\overline{R}_{\sigma \lambda}+g^{\alpha\nu}\biggl( \nabla_{\lambda}  \Sigma_{\alpha\sigma\nu} -\nabla_{\alpha}  \Sigma_{\lambda\sigma\nu}   \biggr),$$
and defined,
\begin{align*}
U_{\lambda\kappa\sigma} :=  \Sigma_{\lambda\kappa\sigma} -  \Sigma_{\lambda\sigma\kappa},   \ \ \  \ \ 	L_{\sigma \lambda}:=g^{\alpha\kappa}\biggl(\nabla_{\alpha} U_{\lambda\kappa\sigma}  	- \nabla_{\lambda} U_{\alpha\kappa\sigma}
\biggr).
\end{align*}   
From Proposition (\ref{prop:ricsym}) we know that the barred Ricci tensor $\overline{R}_{\sigma \lambda}$ is symmetric. Hence, all that  is left in order to prove that the Einstein tensor is symmetric is to prove that the tensor $L$ is symmetric. By splitting the tensor $L$ in a skew-symmetric and symmetric part we have,  
\begin{align*}
L_{\sigma \lambda}&= \frac{1}{2}L^+_{\sigma \lambda}
+\frac{1}{2}L^-_{\sigma \lambda} 
\end{align*}  
Next, we prove that the skew-symmetric part vanishes, i.e.\
\begin{align*}
 L^-_{\sigma \lambda} &=g^{\alpha\kappa}\biggl(\nabla_{\alpha} U_{\lambda\kappa\sigma}  
 -\nabla_{\alpha} U_{\sigma\kappa\lambda}  
 + \nabla_{\sigma} U_{\alpha\kappa\lambda}	- \nabla_{\lambda} U_{\alpha\kappa\sigma}
 \biggr)\\&=
 g^{\alpha\kappa}\biggl(\nabla_{\alpha} U_{\lambda\kappa\sigma}  
 -\nabla_{\alpha} U_{\sigma\kappa\lambda}  
 + \nabla_{\kappa} U_{\alpha\sigma\lambda} 
 \biggr)\\&=
 g^{\alpha\kappa}\biggl(\nabla_{\alpha} U_{\lambda\kappa\sigma}  
 -\nabla_{\alpha} U_{\sigma\kappa\lambda}  
 + \nabla_{\alpha} U_{\kappa\sigma\lambda} 
 \biggr)\\&=
 g^{\alpha\kappa}\nabla_{\alpha}\biggl( U_{\lambda\kappa\sigma}  
 -  U_{\sigma\kappa\lambda}  
 +   U_{\kappa\sigma\lambda} 
 \biggr)\\&=
 g^{\alpha\kappa}\nabla_{\alpha}\biggl(  
 \Sigma_{\lambda\kappa\sigma}  - \Sigma_{\lambda\sigma\kappa}  
 -  \Sigma_{\sigma\kappa\lambda}  +\Sigma_{\sigma\lambda\kappa}  
 +   \Sigma_{\kappa\sigma\lambda} - \Sigma_{\kappa\lambda\sigma} 
 \biggr)\\&=0,
\end{align*} 
where in the last lines we used the second Bianchi identity (see Proposition \ref{prop:secbi})
\begin{align*}
0=
\nabla_{\beta} 	\left( 	U _{\lambda \nu\mu }\right)+	\nabla_{\mu}\left( U _{\lambda \beta\nu  }\right)+\nabla_{\nu}\left(U_{\lambda \mu\beta  } \right),
\end{align*}
and the symmetry of $\Sigma$ in the first two indices.

\end{proof}
\end{theorem}
The former theorem proves that the noncommutative analogue of the Einstein tensor is not just the straightforward generalization from commutative to noncommutative, but has an extra term we defined by the symbol $K$. The next proposition proves that the tensor $K_{\sigma\lambda}$ is traceless and therefore not proportional to the term $g_{\sigma\lambda}\,\Lambda$ for $\Lambda\neq0$. 
\begin{proposition} 
	The quantity $K_{\sigma\lambda}$ is traceless, i.e.\ $$K=g^{\sigma\lambda}K_{\sigma\lambda}=0.$$
	\begin{proof}
	From Proposition \ref{prop:riemsym} we calculate $K_{\sigma\lambda}=g^{\kappa\alpha}K_{\kappa\sigma\alpha\lambda}$ 
	\begin{align*}
		K_{\sigma\lambda}&=g^{\kappa\alpha}K_{\kappa\sigma\alpha\lambda}=g^{\kappa\alpha}\left(\nabla_{\alpha} \Sigma_{\lambda\kappa\sigma} +  \nabla_{\alpha} \Sigma_{\lambda\sigma\kappa}
		-\nabla_{\lambda} \Sigma_{\alpha\kappa\sigma} 
		- \nabla_{\lambda} \Sigma_{\alpha\sigma\kappa}\right)\\
		&=\nabla^{\alpha} \Sigma_{\lambda\alpha\sigma} +  \nabla^{\alpha} \Sigma_{\lambda\sigma\alpha}
		-\nabla_{\lambda} (g^{\kappa\alpha}\Sigma_{\alpha\kappa\sigma}) 
		- \nabla_{\lambda} (g^{\kappa\alpha}\Sigma_{\alpha\sigma\kappa}),
	\end{align*}	
	the trace is immediate from this
	\begin{align*}
		K&=g^{\sigma\lambda}K_{\sigma\lambda}=\nabla^{\alpha}(g^{\sigma\lambda}\Sigma_{\lambda\alpha\sigma}) +  \nabla^{\alpha} (g^{\sigma\lambda}\Sigma_{\lambda\sigma\alpha})
		-\nabla^{\lambda} (g^{\kappa\alpha}\Sigma_{\alpha\kappa\lambda}) 
		- \nabla^{\lambda} (g^{\kappa\alpha}\Sigma_{\alpha\lambda\kappa})\\
		&=0,
	\end{align*}
	where the terms all  cancel.
	\end{proof}
\end{proposition}	

	\section{Conclusions and Outlook}
By using the works of (\cite{C}, \cite{DV1}, \cite{DV2} and \cite{BM}) we   calculated a formula for the covariant derivative and connection for a general noncommutative spacetime of the Lie-algebraic type up to first order, see Theorem \ref{P3.1}.
The main ingredient is to demand the centrality of the metric tensor (or the line-element) and the exterior derivative $d$ that can be defined for any associative unital algebra. This is done by using the universal differential calculus. \newline\newline
The general formula allowed us to calculate the corresponding geometrical entities such as the the Riemann tensor and the Ricci scalar. This furthermore allowed us to write a unique Einstein-tensor that is symmetric and divergenceless.\newline\newline
The noncommutative Einstein tensor has an unexpected additional term, denoted by the tensor $K$. This term would be missing if one would just replace the commutative geometrical quantities of the Einstein tensor with the noncommutative versions thereof.  From a  mathematical point of view it is interesting to prove   that the noncommutative Riemann tensor given in Proposition \ref{prop:44},  satisfies the Gauss-Bonnet theorem, as is the case of the noncommutative torus of Connes \cite{Connes2}. 
\newline\newline
The next line of work in this context is investigating physical consequences of this formula and examining the physical reality related to those models. Furthermore, with all the geometry issues  settled the next steps in progress are to explore the dynamics of matter in  a noncommutative spacetime. The interest w.r.t. matter comes from the argument  that   noncommutative geometry may give origin to matter, see \cite{chamseddine_connes_1996} and the review \cite{R3}.  By applying our general formulas to specific models we intend to see if this statement holds in the context of noncommutative spacetimes of   Lie-algebraic type. This however is current work in progress.   
\section*{Acknowledgments}
The authors acknowledge partial support from DGAPA-UNAM grant IN $103919$, 
CONACYT project $237503$ and  CONACYT project  $258259$.  

\appendix
\section{Appendix}
In this appendix we present all the proofs that were omitted from the main text in order to facilitate its readability.
\subsection{Proof of Proposition \ref{prop:cmd}}\label{proof:e2}

	\begin{proof}
We first calculate the commutator of the algebra with the inverse metric. This is done by simple algebraic manipulations
\begin{align*}0=[x^{\beta},\delta_{\mu}^{\nu}]= 
[x^{\beta},g_{\mu\rho}g^{\rho\nu}] =[x^{\beta},g_{\mu\rho}] g^{\rho\nu}+g_{\mu\rho}[x^{\beta},g^{\rho\nu}] ,
\end{align*}
which leads to
\begin{align*}g^{\mu\sigma}
[x^{\beta},g_{\mu\rho}] g^{\rho\nu}=-g^{\mu\sigma}g_{\mu\rho}[x^{\beta},g^{\rho\nu}] ,
\end{align*} 
which finally leads to 
\begin{align*}
[x^{\beta},g^{\sigma\nu}] =-g^{\mu\sigma}
[x^{\beta},g_{\mu\rho}] g^{\rho\nu}
& =-g^{\mu\sigma}(
 D^{\kappa\beta}_{\ \ \mu}g_{\kappa\rho}+ D^{\kappa\beta}_{\ \ \rho}g_{\kappa\mu}
 )g^{\rho\nu}\\&=-
 (
 D^{\nu\beta}_{\ \ \mu}g^{\mu\sigma}+ D^{\sigma\beta}_{\ \ \rho} g^{\rho\nu}
 )\\&=-
 (  D^{\sigma\beta}_{\ \ \rho} g^{\rho\nu}+
 D^{\nu\beta}_{\ \ \rho}g^{\rho\sigma}
 ).
\end{align*}  
Next, we turn to the commutator of the generators of the algebra and the derivative of the metric up to first order $[x^{\mu},D_{ \alpha}g_{\lambda\nu} ]$. To calculate this expression we take the   derivative (that fulfils Leibniz) of the known commutator of the generators and the metric, i.e.
\begin{align*}
D_{ \alpha}[x^{\mu}, g_{\lambda\nu} ]&=[D_{ \alpha}x^{\mu}, g_{\lambda\nu} ]+[x^{\mu},D_{ \alpha}g_{\lambda\nu} ]\\&=D_{\ \  \alpha}^{\mu\rho}
[\hat{\partial_{\rho}} , g_{\lambda\nu} ]+[x^{\mu},D_{ \alpha}g_{\lambda\nu} ]
\\&=D_{\  \ \alpha}^{\mu\beta}
  D_{\beta} g_{\lambda\nu}  +[x^{\mu},D_{ \alpha}g_{\lambda\nu} ]\\&=
  D_{\ \ \lambda}^{\beta\mu}D_{ \alpha}g_{\beta\nu}+  D_{ \ \ \nu}^{\beta\mu}D_{ \alpha}g_{\beta\lambda},
\end{align*}  
where the last line we have from inserting the result of the commutator of the left hand side and applying the derivative.  
\end{proof}

\subsection{Proof of Proposition \ref{prop:ccd}}\label{proof:cdww1}

\begin{proof}

We split the following commutator in two terms
\begin{align*}
[x^{\kappa}, \overline{\Gamma}_{\rho\sigma}^{\mu} ]&=\frac{1}{2}
[x^{\kappa}, g^{\mu\lambda}]
\left(D_{\sigma}g_{\lambda\rho}+D_{\rho}g_{\lambda\sigma}-D_{\lambda}g_{\rho\sigma}
\right)\\&\qquad\qquad  +\frac{1}{2} g^{\mu\lambda}
\left(	[x^{\kappa},D_{\sigma}g_{\lambda\rho}]+	[x^{\kappa},D_{\rho}g_{\lambda\sigma}]-	[x^{\kappa},D_{\lambda}g_{\rho\sigma}]
\right) ,
\end{align*}
where we calculate all commutators by using  Proposition \ref{prop:cmd}
\begin{align*}
[x^{\kappa}, \overline{\Gamma}_{\rho\sigma}^{\mu} ]&=-\frac{1}{2}
\left( 
D^{\mu\kappa}_{\ \ \beta}g^{\beta\lambda}+D^{\lambda\kappa}_{\ \ \beta}g^{\beta\mu}
\right) 
\left(D_{\sigma}g_{\lambda\rho}+D_{\rho}g_{\lambda\sigma}-D_{\lambda}g_{\rho\sigma}
\right)\\&\qquad\qquad  +\frac{1}{2} g^{\mu\lambda}
\biggl(	D^{\beta\kappa}_{\ \ \lambda}D_{\sigma}g_{\beta\rho}+D^{\beta\kappa}_{\ \ \rho}D_{\sigma}g_{\beta\lambda}-D^{\kappa\beta}_{\ \ \sigma}D_{\beta}g_{\lambda\rho}\\&\qquad\qquad\qquad \qquad+
D^{\beta\kappa}_{\ \ \lambda}D_{\rho}g_{\beta\sigma}+D^{\beta\kappa}_{\ \ \sigma}D_{\rho}g_{\beta\lambda}-D^{\kappa\beta}_{\ \ \rho}D_{\beta}g_{\lambda\sigma}
\\&\qquad\qquad\qquad\qquad -
D^{\beta\kappa}_{\ \ \rho}D_{\lambda}g_{\beta\sigma}-D^{\beta\kappa}_{\ \ \sigma}D_{\lambda}g_{\beta\rho}+D^{\kappa\beta}_{\ \ \lambda}D_{\beta}g_{\rho\sigma}
\biggr).
\end{align*}
After a few straightforward cancellations we have, 
\begin{align*}
[x^{\kappa}, \overline{\Gamma}_{\rho\sigma}^{\mu} ]&=  -	D^{\mu\kappa}_{\ \ \beta}\overline{\Gamma}^{\beta}_{\rho\sigma} -\frac{1}{2}
\left( D^{\lambda\kappa}_{\ \ \beta}g^{\beta\mu}
\right) 
\left(\cancel{D_{\sigma}g_{\lambda\rho}}+\bcancel{D_{\rho}g_{\lambda\sigma}}-D_{\lambda}g_{\rho\sigma}
\right)\\&\qquad\qquad  +\frac{1}{2} g^{\mu\beta}
\biggl(\cancel{	D^{\lambda\kappa}_{\ \ \beta}D_{\sigma}g_{\lambda\rho}}+D^{\lambda\kappa}_{\ \ \rho}D_{\sigma}g_{\lambda\beta}-D^{\kappa\lambda}_{\ \ \sigma}D_{\lambda}g_{\beta\rho}\\&\qquad\qquad\qquad \qquad+\bcancel{
	D^{\lambda\kappa}_{\ \ \beta}D_{\rho}g_{\lambda\sigma}}+D^{\lambda\kappa}_{\ \ \sigma}D_{\rho}g_{\lambda\beta}-D^{\kappa\lambda}_{\ \ \rho}D_{\lambda}g_{\beta\sigma}
\\&\qquad\qquad\qquad\qquad -
D^{\lambda\kappa}_{\ \ \rho}D_{\beta}g_{\lambda\sigma}-D^{\lambda\kappa}_{\ \ \sigma}D_{\beta}g_{\lambda\rho}+D^{\kappa\lambda}_{\ \ \beta}D_{\lambda}g_{\rho\sigma}
\biggr),
\end{align*}
  due to the (skew-)symmetry properties of $D$, some terms become $C$ and $S$,
  \begin{align*}
[x^{\kappa}, \overline{\Gamma}_{\rho\sigma}^{\mu} ]&=    -	D^{\mu\kappa}_{\ \ \beta}\overline{\Gamma}^{\beta}_{\rho\sigma} + 
S^{\lambda\kappa}_{\ \ \beta}g^{\beta\mu} 
\left(  D_{\lambda}g_{\rho\sigma}
\right)\\&\qquad\qquad  +\frac{1}{2} g^{\mu\beta}
\biggl(  D^{\lambda\kappa}_{\ \ \rho}(D_{\sigma}g_{\lambda\beta}
-
D_{\beta}g_{\lambda\sigma})-D^{\kappa\lambda}_{\ \ \rho}D_{\lambda}g_{\beta\sigma}   
+(\rho\leftrightarrow\sigma)  
\biggr)\\&=    -	D^{\mu\kappa}_{\ \ \beta}\overline{\Gamma}^{\beta}_{\rho\sigma} + 
S^{\lambda\kappa}_{\ \ \beta}g^{\beta\mu} 
\left(  D_{\lambda}g_{\rho\sigma}
\right)\\&\qquad\qquad  +\frac{1}{2} g^{\mu\beta}
\biggl(  D^{\lambda\kappa}_{\ \ \rho}(D_{\sigma}g_{\lambda\beta}
-
D_{\beta}g_{\lambda\sigma})-D^{\lambda\kappa}_{\ \ \rho}D_{\lambda}g_{\beta\sigma}   +C^{\lambda\kappa}_{\ \ \rho}D_{\lambda}g_{\beta\sigma}   
+(\rho\leftrightarrow\sigma)  
\biggr) 	\\&=    -	D^{\mu\kappa}_{\ \ \beta}\overline{\Gamma}^{\beta}_{\rho\sigma} + 
S^{\lambda\kappa}_{\ \ \beta}g^{\beta\mu} 
D_{\lambda}g_{\rho\sigma}
\\&\qquad\qquad  +\frac{1}{2} g^{\mu\beta}
\biggl(  D^{\lambda\kappa}_{\ \ \rho}(D_{\sigma}g_{\lambda\beta}+D_{\lambda}g_{\beta\sigma}   
-
D_{\beta}g_{\lambda\sigma})-2D^{\lambda\kappa}_{\ \ \rho}D_{\lambda}g_{\beta\sigma}   +C^{\lambda\kappa}_{\ \ \rho}D_{\lambda}g_{\beta\sigma}   
\\&\qquad\qquad+(\rho\leftrightarrow\sigma ) 
\biggr) \\&=    -	D^{\mu\kappa}_{\ \ \lambda}\overline{\Gamma}^{\lambda}_{\rho\sigma} 	+D^{\lambda\kappa}_{\ \ \rho}\overline{\Gamma}^{\mu}_{\lambda\sigma}+D^{\lambda\kappa}_{\ \ \sigma}\overline{\Gamma}^{\mu}_{\lambda\rho}\\&\qquad\qquad
+ 
S^{\lambda\kappa}_{\ \ \beta}g^{\beta\mu} 
D_{\lambda}g_{\rho\sigma} 
+\frac{1}{2} g^{\mu\beta}
\biggl(  -2D^{\lambda\kappa}_{\ \ \rho}D_{\lambda}g_{\beta\sigma}   +C^{\lambda\kappa}_{\ \ \rho}D_{\lambda}g_{\beta\sigma}   
+(\rho\leftrightarrow\sigma)  
\biggr) \\&=    -	D^{\mu\kappa}_{\ \ \lambda}\overline{\Gamma}^{\lambda}_{\rho\sigma} 	+D^{\lambda\kappa}_{\ \ \rho}\overline{\Gamma}^{\mu}_{\lambda\sigma}+D^{\lambda\kappa}_{\ \ \sigma}\overline{\Gamma}^{\mu}_{\lambda\rho}\\&\qquad\qquad 
- g^{\mu\beta}
\biggl( S^{\lambda\kappa}_{\ \ \rho}D_{\lambda}g_{\beta\sigma}   
+S^{\lambda\kappa}_{\ \ \sigma}D_{\lambda}g_{\beta\rho} -  	S^{\lambda\kappa}_{\ \ \beta}
D_{\lambda}g_{\rho\sigma} 
\biggr) ,
\end{align*}
where in the last lines we summarized a term to a connection symbol,   replaced the indices $\beta$ and $\lambda$,
used the identities $D^{\kappa\lambda}_{\ \ \sigma}+D^{\lambda\kappa}_{\ \ \sigma}=2S^{\lambda\kappa}_{\ \ \sigma}$ and $D^{\kappa\lambda}_{\ \ \sigma}=D^{\lambda\kappa}_{\ \ \sigma}-C^{\lambda\kappa}_{\ \ \sigma}$.
\end{proof}  
\subsection{Proof of Proposition \ref{prop:cdww}}\label{proof:cdww}
		\begin{proof}
			We proceed by applying the covariant derivative on the tensor
			\begin{align*}
			\nabla(M)&=(\nabla \otimes \mathbb{I})M+(\sigma\otimes \mathbb{I})\circ(\mathbb{I}\otimes \nabla)M,
			\end{align*}
			where the first term yields
			\begin{align*}
			(\nabla \otimes \mathbb{I})M=\nabla(M_{\mu\nu}dx^\mu)\otimes  dx^\nu=D_{\alpha}(M_{\mu\nu}) dx^\alpha\otimes  dx^\mu\otimes  dx^\nu-M_{\mu\nu}\tilde{\Gamma}^\mu_{\rho\sigma}dx^\rho\otimes  dx^\sigma\otimes  dx^\nu,
			\end{align*}
			while the second is
			\begin{align*}
			(\sigma\otimes \mathbb{I})\circ(\mathbb{I}\otimes \nabla)M&=(\sigma\otimes \mathbb{I})\circ(M_{\mu\nu}dx^\mu\otimes \nabla(dx^\nu))\\
			&=-(\sigma\otimes \mathbb{I})\circ(M_{\mu\nu}dx^\mu\tilde{\Gamma}^\nu_{\alpha\beta}\otimes dx^\alpha\otimes dx^\beta)\\
			&=-(\sigma\otimes \mathbb{I})\circ(M_{\mu\nu}\tilde{\Gamma}^\nu_{\alpha\beta}dx^\mu\otimes dx^\alpha\otimes dx^\beta)\\
			&\qquad \qquad-(\sigma\otimes \mathbb{I})\circ(M_{\mu\nu}[dx^\mu,\tilde{\Gamma}^\nu_{\alpha\beta}]\otimes dx^\alpha\otimes dx^\beta)\\
			&=-\sigma(M_{\mu\nu}\tilde{\Gamma}^\nu_{\alpha\beta}dx^\mu\otimes dx^\alpha)\otimes dx^\beta-\sigma(M_{\mu\nu}\Sigma^{\mu\nu}_{\alpha\beta\lambda}dx^\lambda\otimes dx^\alpha)\otimes dx^\beta \\
			&=-M_{\mu\nu}\tilde{\Gamma}^\nu_{\alpha\beta}\sigma(dx^\mu\otimes dx^\alpha)\otimes dx^\beta-M_{\mu\nu}\Sigma^{\mu\nu}_{\alpha\beta\lambda}\sigma(dx^\lambda\otimes dx^\alpha)\otimes dx^\beta, 
			\end{align*}
			the calculation of the braiding acting on the basis renders
			\begin{align*}
			\sigma(dx^\mu\otimes dx^\alpha)&=dx^\alpha\otimes dx^\mu+[x^\alpha,\nabla(dx^\mu)]+\nabla([dx^\mu,x^\alpha])\\
			&=dx^\alpha\otimes dx^\mu-[x^\alpha,\tilde{\Gamma}^\mu_{\rho\sigma}dx^\rho\otimes dx^\sigma]-D^{\mu\alpha}_{\ \ \lambda}\tilde{\Gamma}^\lambda_{\rho\sigma}dx^\rho\otimes dx^\sigma\\
			&=dx^\alpha\otimes dx^\mu-([x^\alpha,\tilde{\Gamma}^\mu_{\rho\sigma}]+D^{\mu\alpha}_{\ \ \lambda}\tilde{\Gamma}^\lambda_{\rho\sigma})dx^\rho\otimes dx^\sigma-\tilde{\Gamma}^\mu_{\rho\sigma}[x^\alpha,dx^\rho\otimes dx^\sigma]\\
			&=dx^\alpha\otimes dx^\mu-([x^\alpha,\tilde{\Gamma}^\mu_{\rho\sigma}]+D^{\mu\alpha}_{\ \ \lambda}\tilde{\Gamma}^\lambda_{\rho\sigma})dx^\rho\otimes dx^\sigma\\
			&\qquad \qquad\qquad +\tilde{\Gamma}^\mu_{\rho\sigma}(D^{\rho\alpha}_{\ \ \lambda} dx^\lambda\otimes dx^\sigma+D^{\sigma\alpha}_{\ \ \lambda} dx^\rho\otimes dx^\lambda)\\
			&=dx^\alpha\otimes dx^\mu-([x^\alpha,\tilde{\Gamma}^\mu_{\rho\sigma}]+D^{\mu\alpha}_{\ \ \eta}\tilde{\Gamma}^\eta_{\rho\sigma}-D^{\eta\alpha}_{\ \ \rho}\tilde{\Gamma}^\mu_{\eta\sigma}-D^{\eta\alpha}_{\ \ \sigma}\tilde{\Gamma}^\mu_{\rho\eta})dx^\rho\otimes dx^\sigma.
			\end{align*}
		Next, we incorporate our result for $[x^\alpha,\tilde{\Gamma}^\mu_{\rho\sigma}]$ from Proposition \ref{prop:ccd} and Equation (\ref{eq:ccd_eq1}) to find
			\begin{align*}
			\sigma(dx^\mu\otimes dx^\alpha)
			&=dx^\alpha\otimes dx^\mu+g^{\mu\beta}
			\biggl( S^{\lambda\alpha}_{ \ \ \rho}D_{\lambda}g_{\beta\sigma}   
			+S^{\lambda\alpha}_{\ \ \sigma}D_{\lambda}g_{\beta\rho} -  	S^{\lambda\alpha}_{\ \ \beta}
			D_{\lambda}g_{\rho\sigma} 
			\biggr) dx^\rho\otimes dx^\sigma.
			\end{align*}
			By inserting this term into the previous calculation we obtain
			\begin{align*}
			(\sigma\otimes \mathbb{I})\circ(\mathbb{I}\otimes \nabla)M
			&=-M_{\mu\nu}\tilde{\Gamma}^\nu_{\alpha\beta}\left(\delta^\alpha_\rho\delta^\mu_\sigma+g^{\mu\tau}
			\biggl( S^{\lambda\alpha}_{ \ \ \rho}D_{\lambda}g_{\tau\sigma}   
			+S^{\lambda\alpha}_{\ \ \sigma}D_{\lambda}g_{\tau\rho} \right.\\
			&\left.-  	S^{\lambda\alpha}_{\ \ \tau}
			D_{\lambda}g_{\rho\sigma} 
			\biggr) \right)dx^\rho\otimes dx^\sigma\otimes dx^\beta
			-M_{\mu\nu}\Sigma^{\mu\nu}_{\alpha\beta\lambda}\left(\delta^\alpha_\rho\delta^\lambda_\sigma+g^{\lambda\tau}
			\biggl( S^{\theta\alpha}_{\ \ \rho}D_{\theta}g_{\tau\sigma} \right.\\
			&\left.  
			+S^{\theta\alpha}_{\ \ \sigma}D_{\theta}g_{\tau\rho} -  	S^{\theta\alpha}_{\ \ \tau}
			D_{\theta}g_{\rho\sigma} 
			\biggr) \right)dx^\rho\otimes dx^\sigma\otimes dx^\beta, 
			\end{align*}
			where up to first order we have
			\begin{align*}
			(\sigma\otimes \mathbb{I})\circ(\mathbb{I}\otimes \nabla)M
			&=-M_{\mu\nu}\tilde{\Gamma}^\nu_{\alpha\beta}\left(\delta^\alpha_\rho\delta^\mu_\sigma+g^{\mu\tau}
			\biggl( S^{\lambda\alpha}_{\ \ \rho}D_{\lambda}g_{\tau\sigma}   
			+S^{\lambda\alpha}_{\ \ \sigma}D_{\lambda}g_{\tau\rho} \right.\\
			&\left.-  	S^{\lambda\alpha}_{\ \ \tau}
			D_{\lambda}g_{\rho\sigma} 
			\biggr) \right)dx^\rho\otimes dx^\sigma\otimes dx^\beta- M_{\mu\nu}\Sigma^{\mu\nu}_{\ \ \rho\beta\sigma}dx^\rho\otimes dx^\sigma\otimes dx^\beta.
			\end{align*}  Joining both the first and the second term we get
			\begin{align*}
			\nabla (M)&=\left(D_{\rho}(M_{\sigma\nu})-M_{\lambda\nu}\tilde{\Gamma}^\lambda_{\rho\sigma}-M_{\sigma\lambda}\tilde{\Gamma}^\lambda_{\rho\nu}\right) dx^\rho\otimes  dx^\sigma\otimes  dx^\nu\\
			&-M_{\mu\gamma}\Gamma^\gamma_{\alpha\nu}g^{\mu\tau}
			\biggl( S^{\lambda\alpha}_{\ \ \rho}D_{\lambda}g_{\tau\sigma}   
			+S^{\lambda\alpha}_{\ \ \sigma}D_{\lambda}g_{\tau\rho} -  	S^{\lambda\alpha}_{\ \ \tau}
			D_{\lambda}g_{\rho\sigma} 
			\biggr) dx^\rho\otimes dx^\sigma\otimes dx^\nu\\
			&- M_{\mu\kappa}\Sigma^{\mu\kappa}_{\rho\nu\sigma}  dx^\rho\otimes dx^\sigma\otimes dx^\nu.
			\end{align*}
			Therefore, if we put everything into indices we have
			\begin{align*}
			\nabla_\rho M_{\mu\nu}&=D_{\rho}(M_{\mu\nu})-M_{\lambda\nu}\tilde{\Gamma}^\lambda_{\rho\mu}-M_{\mu\lambda}\tilde{\Gamma}^\lambda_{\rho\nu}\\
			&
			-M_{\eta\gamma}g^{\eta\tau}\Gamma^\gamma_{\alpha\nu}
			\biggl( S^{\lambda\alpha}_{\ \ \rho}D_{\lambda}g_{\tau\mu}   
			+S^{\lambda\alpha}_{\ \ \mu}D_{\lambda}g_{\tau\rho} -  	S^{\lambda\alpha}_{\ \ \tau}
			D_{\lambda}g_{\rho\mu} 
			\biggr) \\
			&- M_{\eta\gamma}
			g^{\eta\tau}\Gamma^{\gamma}_{\alpha\mu}\biggl( S^{\lambda\alpha}_{\ \ \nu}D_{\lambda}g_{\tau\rho}   
			+S^{\lambda\alpha}_{\ \ \rho}D_{\lambda}g_{\tau\nu} -  	S^{\lambda\alpha}_{\ \ \tau}
			D_{\lambda}g_{\nu\rho} 
			\biggr) 
			\\
			&=D_{\rho}(M_{\mu\nu})-M_{\lambda\nu}\tilde{\Gamma}^\lambda_{\rho\mu}-M_{\mu\lambda}\tilde{\Gamma}^\lambda_{\rho\nu}
			-M_{\lambda\beta}	\Sigma^{\lambda\beta}_{\mu\rho\nu} - M_{\lambda\beta}	\Sigma^{\lambda\beta}_{\nu\rho\mu},
			\end{align*}
			where we inserted the $\Sigma$-term
		\begin{align*}
	\Sigma^{\lambda\beta}_{\alpha\nu\mu} =
	g^{\lambda\kappa}\Gamma^\beta_{\rho\mu}
	\biggl( S^{\sigma\rho}_{\ \ \alpha}D_{\sigma}g_{\kappa\nu}   
	+S^{\sigma\rho}_{\ \ \nu}D_{\sigma}g_{\kappa\alpha} -  	S^{\sigma\rho}_{\ \ \kappa}
	D_{\sigma}g_{\alpha\nu} 
	\biggr),
	\end{align*} in the last lines. We proceed by applying the covariant derivative on the tensor $T\in\Omega^1(\mathcal{A})\otimes \Omega^1(\mathcal{A})\otimes \Omega^1(\mathcal{A})$,
		\begin{align*}
		\nabla(T)=\nabla(T_{\rho\mu\nu}dx^{\rho}\otimes dx^{\mu}\otimes dx^{\nu} ).
		\end{align*}
		First, let us write the tensor $T$ as a tensor product of a one-form and a two-form (which is generated by juxtaposition), i.e. 
		\begin{align*}
		\nabla(T)= \nabla (\omega \otimes \omega'),
		\end{align*}
		where $\omega\in\Omega^1(\mathcal{A})$ and $\omega'\in\Omega^1(\mathcal{A})\otimes\Omega^1(\mathcal{A})$. By using Definition \ref{c1} the last expression is given by,  
		\begin{align*}
		\nabla(T)&= \nabla (\omega \otimes \omega') \\&=
		\nabla\omega \otimes \omega'+\sigma_3(\omega \otimes\nabla\omega').
		\end{align*}	
		Before calculating the whole expression let us first calculate explicitly the term $\nabla\omega'$, since $\omega'\in\Omega^1(\mathcal{A})\otimes\Omega^1(\mathcal{A})$
		the covariant derivative thereof is not trivial. Hence, we have 		
		\begin{align*}
		\nabla\omega'&=(\nabla \otimes \mathbb{I})\omega'+(\sigma\otimes \mathbb{I})\circ(\mathbb{I}\otimes \nabla)\omega'
		\\&= 
		\nabla  (dx^{\mu})\otimes dx^{\nu} 
		+(\sigma\otimes \mathbb{I})(dx^{\mu} \otimes  \nabla( dx^{\nu})) \\&= -\tilde{\Gamma}^{\mu}_{\lambda\kappa}
		dx^{\lambda}\otimes dx^{\kappa} \otimes dx^{\nu} 
		-(\sigma\otimes \mathbb{I})(dx^{\mu} \otimes   
		\tilde{\Gamma}^{\nu}_{\lambda\kappa}
		dx^{\lambda}\otimes dx^{\kappa} 
		).
		\end{align*}
		Next, we commute $\tilde{\Gamma}$  in order to have it to the leftmost, this   yields a $\Sigma$-term
		\begin{align*}
		\nabla\omega'&= -\tilde{\Gamma}^{\mu}_{\lambda\kappa}
		dx^{\lambda}\otimes dx^{\kappa} \otimes dx^{\nu} 
		-(\sigma\otimes \mathbb{I})((
		\Sigma^{\mu\nu}_{\lambda\kappa\sigma}dx^{\sigma} 
		+\tilde{\Gamma}^{\nu}_{\lambda\kappa} dx^{\mu})\otimes    
		dx^{\lambda}\otimes dx^{\kappa} 
		)\\&= -\tilde{\Gamma}^{\mu}_{\lambda\kappa}
		dx^{\lambda}\otimes dx^{\kappa} \otimes dx^{\nu} 
		- \Sigma^{\mu\nu}_{\lambda\kappa\sigma}\sigma ( 
		dx^{\sigma} \otimes    
		dx^{\lambda})\otimes dx^{\kappa} 
		-\tilde{\Gamma}^{\nu}_{\lambda\kappa} \sigma(dx^{\mu} \otimes    
		dx^{\lambda})\otimes dx^{\kappa}  \\&= -\tilde{\Gamma}^{\mu}_{\lambda\sigma}
		dx^{\lambda}\otimes dx^{\sigma} \otimes dx^{\nu} 
		- \left(\Sigma^{\mu\nu}_{\lambda\kappa\sigma} 
		dx^{\lambda} \otimes    
		dx^{\sigma} 
		+\tilde{\Gamma}^{\nu}_{\lambda\kappa}   dx^{\lambda} \otimes    
		dx^{\mu} \right)\otimes dx^{\kappa}  
		\\&-  {\Gamma}^{\nu}_{\tau\kappa}  g^{\mu\beta}
		\biggl( S^{\theta\tau}_{\ \ \lambda}D_{\theta}g_{\beta\sigma}   
		+S^{\theta\tau}_{ \ \ \sigma}D_{\theta}g_{\beta\lambda} -  	S^{\theta\tau}_{\ \ \beta}
		D_{\theta}g_{\lambda\sigma} 
		\biggr) dx^\lambda\otimes dx^\sigma\otimes dx^{\kappa}  
		\\&=:K^{\mu\nu}_{\lambda\sigma\kappa}dx^\lambda\otimes dx^\sigma\otimes dx^{\kappa}, 
		\end{align*}	
		where in the last lines we used the expression 	
		\begin{align*}
		\sigma(dx^\mu\otimes dx^\lambda)
		&=dx^\lambda\otimes dx^\mu+g^{\mu\beta}
		\biggl( S^{\theta\lambda}_{\ \ \alpha}D_{\theta}g_{\beta\sigma}   
		+S^{\theta\lambda}_{\ \ \sigma}D_{\theta}g_{\beta\alpha} -  	S^{\theta\lambda}_{\ \ \beta}
		D_{\theta}g_{\alpha\sigma} 
		\biggr) dx^\alpha\otimes dx^\sigma,
		\end{align*}
		and defined the object $K$ as 
		\begin{align}\label{eq:k}
		K^{\mu\nu}_{\lambda\sigma\kappa}:&= -\tilde{\Gamma}^{\mu}_{\lambda\sigma}\delta^{\nu}_{\kappa} -\tilde{\Gamma}^{\nu}_{\lambda\kappa}\delta^{\mu}_{\sigma}
		- \Sigma^{\mu\nu}_{\lambda\kappa\sigma} -  {\Gamma}^{\nu}_{\tau\kappa}  g^{\mu\beta}
		\biggl( S^{\theta\tau}_{\ \ \lambda}D_{\theta}g_{\beta\sigma}   
		+S^{\theta\tau}_{\ \ \sigma}D_{\theta}g_{\beta\lambda} -  	S^{\theta\tau}_{\ \ \beta}
		D_{\theta}g_{\lambda\sigma} 
		\biggr).
		\end{align}
		With this, we can turn our attention to the term $\sigma_3(\omega \otimes\nabla\omega')$, we start by substituting our recent findings
		\begin{align*}
		\sigma_3(\omega \otimes\nabla\omega')&=
		\sigma_3(T_{\rho\mu\nu}dx^{\rho}\otimes
		K^{\mu\nu}_{\lambda\sigma\kappa}dx^\lambda\otimes dx^\sigma\otimes dx^{\kappa} 
		)\\
		&=\sigma_3(T_{\rho\mu\nu}dx^{\rho}K^{\mu\nu}_{\lambda\sigma\kappa}\otimes dx^\lambda\otimes dx^\sigma\otimes dx^{\kappa} 
		)\\&=\sigma_3(T_{\rho\mu\nu}([dx^{\rho},K^{\mu\nu}_{\lambda\sigma\kappa}]+K^{\mu\nu}_{\lambda\sigma\kappa}dx^{\rho})\otimes dx^\lambda\otimes dx^\sigma\otimes dx^{\kappa} 
		),\\&=\sigma_3(T_{\rho\mu\nu}(-\left(\Sigma^{\rho\mu}_{\lambda\sigma\alpha}\delta^{\nu}_{\kappa} 
		+\Sigma^{\rho\nu}_{\lambda\kappa\alpha}\delta^{\mu}_{\sigma}
		\right)dx^{\alpha}+K^{\mu\nu}_{\lambda\sigma\kappa}dx^{\rho})\otimes dx^\lambda\otimes dx^\sigma\otimes dx^{\kappa} 
		),\end{align*}
		where we have used the commutator of $dx$ and $K$ which can be easily calculated up to leading order from Equation (\ref{eq:k}), this renders
		\begin{align*}
			[dx^{\rho},K^{\mu\nu}_{\lambda\sigma\kappa}]&=
			-[dx^{\rho},\tilde{\Gamma}^{\mu}_{\lambda\sigma}\delta^{\nu}_{\kappa}] -[dx^{\rho},\tilde{\Gamma}^{\nu}_{\lambda\kappa}\delta^{\mu}_{\sigma}]\\&=
			-\left(\Sigma^{\rho\mu}_{\lambda\sigma\alpha}\delta^{\nu}_{\kappa} 
			+\Sigma^{\rho\nu}_{\lambda\kappa\alpha}\delta^{\mu}_{\sigma}
			\right)dx^{\alpha}.
		\end{align*}
	By going back to the main calculation we have, 
		\begin{align*}
		\sigma_3(\omega \otimes\nabla\omega')&=\sigma_3(-T_{\rho\mu\nu} \left(\Sigma^{\rho\mu}_{\lambda\sigma\alpha}\delta^{\nu}_{\kappa} 
		+\Sigma^{\rho\nu}_{\lambda\kappa\alpha}\delta^{\mu}_{\sigma}
		\right)dx^{\alpha}\otimes dx^\lambda\otimes dx^\sigma\otimes dx^{\kappa} 
		\\
		&+T_{\rho\mu\nu}K^{\mu\nu}_{\lambda\sigma\kappa}dx^{\rho} \otimes dx^\lambda\otimes dx^\sigma\otimes dx^{\kappa} 
		)\\&= -T_{\rho\mu\nu} \left(\Sigma^{\rho\mu}_{\lambda\sigma\alpha}\delta^{\nu}_{\kappa} 
		+\Sigma^{\rho\nu}_{\lambda\kappa\alpha}\delta^{\mu}_{\sigma}
		\right)dx^{\lambda}\otimes dx^\alpha\otimes dx^\sigma\otimes dx^{\kappa} 
		\\&+T_{\rho\mu\nu}K^{\mu\nu}_{\lambda\sigma\kappa}\sigma(dx^{\rho} \otimes dx^\lambda)\otimes dx^\sigma\otimes dx^{\kappa} \\&= -T_{\rho\mu\nu} \left(\Sigma^{\rho\mu}_{\lambda\sigma\alpha}\delta^{\nu}_{\kappa} 
		+\Sigma^{\rho\nu}_{\lambda\kappa\alpha}\delta^{\mu}_{\sigma}
		\right)dx^{\lambda}\otimes dx^\alpha\otimes dx^\sigma\otimes dx^{\kappa} 
		\\&+T_{\alpha\mu\nu}K^{\mu\nu}_{\lambda\sigma\kappa}  dx^{\lambda } \otimes dx^\alpha  \otimes dx^\sigma\otimes dx^{\kappa} \\&-
		T_{\rho\mu\nu}(  {\Gamma}^{\mu}_{\tau\sigma} \delta^{\nu}_{\kappa} + {\Gamma}^{\nu}_{\tau\kappa} \delta^{\mu}_{\sigma})
		g^{\rho\beta}
		\biggl( S^{\theta\tau}_{\ \ \lambda}D_{\theta}g_{\beta\alpha}   
		+S^{\theta\tau}_{\ \ \alpha}D_{\theta}g_{\beta\lambda} \\
		&-  	S^{\theta\tau}_{\ \ \beta}
		D_{\theta}g_{\lambda\alpha} 
		\biggr) dx^\lambda\otimes dx^\alpha\otimes dx^\sigma\otimes dx^{\kappa}.
		\end{align*}
		In the following we factorize the basis of $\Omega^1(\mathcal{A})\otimes \Omega^1(\mathcal{A})\otimes \Omega^1(\mathcal{A})$ as 
		\begin{align*}
		\sigma_3(\omega \otimes\nabla\omega')&= -\biggl(T_{\rho\mu\nu} \left(\Sigma^{\rho\mu}_{\lambda\sigma\alpha}\delta^{\nu}_{\kappa} 
		+\Sigma^{\rho\nu}_{\lambda\kappa\alpha}\delta^{\mu}_{\sigma}
		\right) \\&
		+T_{\alpha\mu\nu}\biggl( +\tilde{\Gamma}^{\mu}_{\lambda\sigma}\delta^{\nu}_{\kappa} +\tilde{\Gamma}^{\nu}_{\lambda\kappa}\delta^{\mu}_{\sigma}
		+ \Sigma^{\mu\nu}_{\lambda\kappa\sigma} + {\Gamma}^{\nu}_{\tau\kappa}  g^{\mu\beta}
		\biggl( S^{\theta\tau}_{\ \ \lambda}D_{\theta}g_{\beta\sigma}   
		+S^{\theta\tau}_{\ \ \sigma}D_{\theta}g_{\beta\lambda}   \\&-  	S^{\theta\tau}_{\ \ \beta}
		D_{\theta}g_{\lambda\sigma} 
		\biggr)\biggr)+
		T_{\rho\mu\nu}(  {\Gamma}^{\mu}_{\tau\sigma} \delta^{\nu}_{\kappa} + {\Gamma}^{\nu}_{\tau\kappa} \delta^{\mu}_{\sigma})
		g^{\rho\beta}
		\biggl( S^{\theta\tau}_{\ \ \lambda}D_{\theta}g_{\beta\alpha}   
		+S^{\theta\tau}_{\ \ \alpha}D_{\theta}g_{\beta\lambda} \\&-  	S^{\theta\tau}_{\ \ \beta}
		D_{\theta}g_{\lambda\alpha} 
		\biggr) \biggr)dx^\lambda\otimes dx^\alpha\otimes dx^\sigma\otimes dx^{\kappa} 
		\\&= -\biggl(  T_{\rho\nu\kappa}\Sigma^{\rho\nu}_{\lambda\sigma\alpha} 
		+T_{\rho\sigma\nu}\Sigma^{\rho\nu}_{\lambda\kappa\alpha}+T_{\alpha\rho\nu} \Sigma^{\rho\nu}_{\lambda\kappa\sigma}
		+ T_{\alpha\mu\kappa} \tilde{\Gamma}^{\mu}_{\lambda\sigma}  +T_{\alpha\sigma\nu}\tilde{\Gamma}^{\nu}_{\lambda\kappa} 
		\\&+
		( T_{\rho\nu\kappa} {\Gamma}^{\nu}_{\tau\sigma}   +T_{\rho\sigma\nu} {\Gamma}^{\nu}_{\tau\kappa}   )
		g^{\rho\beta}
		\biggl( S^{\theta\tau}_{\ \ \lambda}D_{\theta}g_{\beta\alpha}   
		+S^{\theta\tau}_{\ \ \alpha}D_{\theta}g_{\beta\lambda} -  	S^{\theta\tau}_{\ \ \beta}
		D_{\theta}g_{\lambda\alpha} 
		\biggr) 
		\\&
		+T_{\alpha\rho\nu}  {\Gamma}^{\nu}_{\tau\kappa}  g^{\rho\beta}
		\biggl( S^{\theta\tau}_{\ \ \lambda}D_{\theta}g_{\beta\sigma}   
		+S^{\theta\tau}_{\ \ \sigma}D_{\theta}g_{\beta\lambda} \\
		&-  	S^{\theta\tau}_{\ \ \beta}
		D_{\theta}g_{\lambda\sigma} 
		\biggr) 
		\biggr)   dx^\lambda\otimes dx^\alpha\otimes dx^\sigma\otimes dx^{\kappa}  . 
		\end{align*}		
		Next, we look at the first term of the covariant derivative of the three-tensor, i.e.  
		\begin{align*}
		\nabla\omega \otimes \omega'&=
		\nabla(T_{\rho\mu\nu}dx^{\rho})\otimes dx^{\mu}\otimes dx^{\nu} \\&=D_{\lambda}T_{\alpha\sigma\kappa}\,dx^{\lambda} \otimes dx^{\alpha} \otimes dx^{\sigma}\otimes dx^{\kappa}-T_{\rho\sigma\kappa}\tilde{\Gamma}^{\rho}_{\lambda\alpha}
		dx^{\lambda} \otimes dx^{\alpha}\otimes dx^{\sigma}\otimes dx^{\kappa}.
		\end{align*} 
		By summing all the above terms the final result for the covariant derivative of the three-tensor  
		which finally obtain
		\begin{align*}
		\nabla_{\lambda}(T_{\alpha\sigma\kappa})&=   D_{\lambda}T_{\alpha\sigma\kappa}  
		-
		T_{\rho\sigma\kappa}\tilde{\Gamma}^{\rho}_{\lambda\alpha}
		-T_{\alpha\rho\kappa} \tilde{\Gamma}^{\rho}_{\lambda\sigma}  -T_{\alpha\sigma\rho}\tilde{\Gamma}^{\rho}_{\lambda\kappa}   -T_{\rho\nu\kappa}\Sigma^{\rho\nu}_{\lambda\sigma\alpha} 
		-T_{\rho\sigma\nu}\Sigma^{\rho\nu}_{\lambda\kappa\alpha}-T_{\alpha\rho\nu} \Sigma^{\rho\nu}_{\lambda\kappa\sigma}
		\\&-
		( T_{\rho\nu\kappa} {\Gamma}^{\nu}_{\tau\sigma}   +T_{\rho\sigma\nu} {\Gamma}^{\nu}_{\tau\kappa}   )
		g^{\rho\beta}
		\left( S^{\theta\tau}_{\ \ \lambda}D_{\theta}g_{\beta\alpha}   
		+S^{\theta\tau}_{\ \ \alpha}D_{\theta}g_{\beta\lambda} -  	S^{\theta\tau}_{\ \ \beta}
		D_{\theta}g_{\lambda\alpha} 
		\right)   \\&
		-T_{\alpha\rho\nu}  {\Gamma}^{\nu}_{\tau\kappa}  g^{\rho\beta}
		\biggl( S^{\theta\tau}_{\ \ \lambda}D_{\theta}g_{\beta\sigma}   
		+S^{\theta\tau}_{\ \ \sigma}D_{\theta}g_{\beta\lambda} -  	S^{\theta\tau}_{\ \ \beta}
		D_{\theta}g_{\lambda\sigma} 
		\biggr) \\&=   D_{\lambda}T_{\alpha\sigma\kappa}  
		-
		T_{\rho\sigma\kappa}\tilde{\Gamma}^{\rho}_{\lambda\alpha}
		-T_{\alpha\rho\kappa} \tilde{\Gamma}^{\rho}_{\lambda\sigma}  -T_{\alpha\sigma\rho}\tilde{\Gamma}^{\rho}_{\lambda\kappa}   -T_{\rho\nu\kappa}\Sigma^{\rho\nu}_{\lambda\sigma\alpha} 
		-T_{\rho\sigma\nu}\Sigma^{\rho\nu}_{\lambda\kappa\alpha}-T_{\alpha\rho\nu} \Sigma^{\rho\nu}_{\lambda\kappa\sigma}
		\\&-
	  T_{\rho\nu\kappa} \Sigma^{\rho\nu}_{\lambda\alpha\sigma }   -T_{\rho\sigma\nu} \Sigma^{\rho\nu}_{\lambda\alpha\kappa}      
		-T_{\alpha\rho\nu} \Sigma^{\rho\nu}_{\lambda\sigma\kappa}, 
		\end{align*}
			where we inserted the $\Sigma$-term in the last lines			
				\begin{align*}\Sigma^{\rho\nu}_{\lambda\sigma\kappa}=	 g^{\rho\beta} {\Gamma}^{\nu}_{\tau\kappa}  
		\biggl( S^{\theta\tau}_{\ \ \lambda}D_{\theta}g_{\beta\sigma}   
		+S^{\theta\tau}_{\ \ \sigma}D_{\theta}g_{\beta\lambda} -  	S^{\theta\tau}_{\ \ \beta}
		D_{\theta}g_{\lambda\sigma} 
		\biggr), \\\Sigma^{\rho\nu}_{\lambda\alpha\sigma}=	   g^{\rho\beta} {\Gamma}^{\nu}_{\tau\sigma} 	
				\left( S^{\theta\tau}_{\ \ \lambda}D_{\theta}g_{\beta\alpha}   
				+S^{\theta\tau}_{\ \ \alpha}D_{\theta}g_{\beta\lambda} -  	S^{\theta\tau}_{\ \ \beta}
				D_{\theta}g_{\lambda\alpha} 
				\right),\\\Sigma^{\rho\nu}_{\lambda\alpha\kappa}=	   g^{\rho\beta} {\Gamma}^{\nu}_{\tau\kappa} 	
				\left( S^{\theta\tau}_{\ \ \lambda}D_{\theta}g_{\beta\alpha}   
				+S^{\theta\tau}_{\ \ \alpha}D_{\theta}g_{\beta\lambda} -  	S^{\theta\tau}_{\ \ \beta}
				D_{\theta}g_{\lambda\alpha} 
				\right) .
				\end{align*}
 \end{proof}
 \subsection{Proof of Proposition \ref{prop:riemsym}}\label{proof:riemsym}
\begin{proof}
	The relation we are interested in is the following,  
	\begin{align*}
	[\nabla_{\lambda}, \nabla_{\alpha}]g_{\sigma\kappa}=0.
	\end{align*}
	It is equal to zero by the condition of metricity, however, its l.h.s. reads 
		\begin{align*}
	[\nabla_{\lambda}, \nabla_{\alpha}]  g_{\sigma\kappa} &=\nabla_{\lambda} T_{\alpha\sigma\kappa}-(\alpha\leftrightarrow\lambda)\\&=
	D_{\lambda}T_{\alpha\sigma\kappa}  
	-
	T_{\rho\sigma\kappa}\tilde{\Gamma}^{\rho}_{\lambda\alpha}
	-T_{\alpha\rho\kappa} \tilde{\Gamma}^{\rho}_{\lambda\sigma}  -T_{\alpha\rho\sigma}\tilde{\Gamma}^{\rho}_{\lambda\kappa}   -T_{\rho\nu\kappa}\Sigma^{\rho\nu}_{\lambda\sigma\alpha} 
	-T_{\rho\nu\sigma}\Sigma^{\rho\nu}_{\lambda\kappa\alpha}
	\\&-T_{\alpha\rho\nu} \Sigma^{\rho\nu}_{\lambda\kappa\sigma}-
	T_{\rho\nu\kappa} \Sigma^{\rho\nu}_{\lambda\alpha\sigma }   -T_{\rho\sigma\nu} \Sigma^{\rho\nu}_{\lambda\alpha\kappa}      
	-T_{\alpha\rho\nu} \Sigma^{\rho\nu}_{\lambda\sigma\kappa}-(\alpha\leftrightarrow\lambda)\\&=
	D_{\lambda}T_{\alpha\sigma\kappa}   
	-T_{\alpha\rho\kappa} \tilde{\Gamma}^{\rho}_{\lambda\sigma}  -T_{\alpha\rho\sigma}\tilde{\Gamma}^{\rho}_{\lambda\kappa}   -T_{\rho\nu\kappa}\Sigma^{\rho\nu}_{\lambda\sigma\alpha} 
	-T_{\rho\nu\sigma}\Sigma^{\rho\nu}_{\lambda\kappa\alpha}-T_{\alpha\rho\nu} \Sigma^{\rho\nu}_{\lambda\kappa\sigma}
	\\&      
	-T_{\alpha\rho\nu} \Sigma^{\rho\nu}_{\lambda\sigma\kappa}-(\alpha\leftrightarrow\lambda) ,	\end{align*}
where	we have used the symmetry $T_{\rho\nu\kappa}=T_{\rho\kappa\nu}$. 
	Next, we explicitly calculate the former expression term by term with the exception of the terms containing $\Sigma$. We start with the first term, i.e.	
	\begin{align*}	
	D_{\lambda}T_{\alpha\sigma\kappa}-(\alpha\leftrightarrow\lambda)  & 
	=D_{\lambda}\biggl(D_{\alpha}g_{\sigma\kappa}-g_{\beta\kappa}\overline{\Gamma}^\beta_{\alpha\sigma}-g_{\sigma\beta}\overline{\Gamma}^\beta_{\alpha\kappa} \biggr)-(\alpha\leftrightarrow\lambda) 
	\\&=\biggl( -( D_{\lambda}g_{\beta\kappa})\overline{\Gamma}^\beta_{\alpha\sigma}
	-g_{\beta\kappa} D_{\lambda}\overline{\Gamma}^\beta_{\alpha\sigma}
-( D_{\lambda}g_{\beta\sigma})\overline{\Gamma}^\beta_{\alpha\kappa}
-g_{\beta\sigma} D_{\lambda}\overline{\Gamma}^\beta_{\alpha\kappa} 
\biggr)\\&-(\alpha\leftrightarrow\lambda  ),
		\end{align*}
	 which was for the derivatives of $T$, we also have terms of the following form,	 
		\begin{align*}	
		-T_{\alpha\rho\kappa} \tilde{\Gamma}^{\rho}_{\lambda\sigma} 
	& 
		=-\biggl((D_{ {\alpha}}g_{\rho\kappa})-g_{\beta\kappa}\overline{\Gamma}^\beta_{\alpha\rho}-g_{\rho\beta}\overline{\Gamma}^\beta_{\alpha\kappa} \biggr)\tilde{\Gamma}^{\rho}_{\lambda\sigma} \\&=-\biggl((D_{ {\alpha}}g_{\rho\kappa})\tilde{\Gamma}^{\rho}_{\lambda\sigma}-g_{\beta\kappa}\overline{\Gamma}^\beta_{\alpha\rho}\tilde{\Gamma}^{\rho}_{\lambda\sigma}-g_{\rho\beta}\overline{\Gamma}^\beta_{\alpha\kappa}\tilde{\Gamma}^{\rho}_{\lambda\sigma} \biggr),
			\end{align*}
	similarly, we obtain
		\begin{align*}	
	-T_{\alpha\rho\sigma} \tilde{\Gamma}^{\rho}_{\lambda\kappa} 
	& 
	=-\biggl((D_{ {\alpha}}g_{\rho\sigma})-g_{\beta\sigma}\overline{\Gamma}^\beta_{\alpha\rho}-g_{\rho\beta}\overline{\Gamma}^\beta_{\alpha\sigma}\biggr) \tilde{\Gamma}^{\rho}_{\lambda\kappa}  \\&=-\biggl((D_{ {\alpha}}g_{\rho\sigma})\tilde{\Gamma}^{\rho}_{\lambda\kappa}-g_{\beta\sigma}\overline{\Gamma}^\beta_{\alpha\rho}\tilde{\Gamma}^{\rho}_{\lambda\kappa}-g_{\rho\beta}\overline{\Gamma}^\beta_{\alpha\sigma}\tilde{\Gamma}^{\rho}_{\lambda\kappa} \biggr).
	\end{align*} 
	By reinserting the explicit terms into the commutator we have
	\begin{align*}	
	[\nabla_{\lambda}, \nabla_{\alpha}]  g_{\sigma\kappa}&=
	\biggl( -( D_{\lambda}g_{\beta\kappa})\tilde{\Gamma}^\beta_{\alpha\sigma}
	-g_{\beta\kappa} D_{\lambda}\tilde{\Gamma}^\beta_{\alpha\sigma}
	-( D_{\lambda}g_{\beta\sigma})\tilde{\Gamma}^\beta_{\alpha\kappa}
	-g_{\beta\sigma} D_{\lambda}\tilde{\Gamma}^\beta_{\alpha\kappa}\\&
	-(D_{\lambda}g_{\tau\beta})\Sigma^{\tau\beta}_{\alpha\kappa\sigma}
	-g_{\tau\beta}D_{\lambda}\Sigma^{\tau\beta}_{\alpha\kappa\sigma} 
	-(D_{\lambda}g_{\tau\beta})\Sigma^{\tau\beta}_{\alpha\sigma\kappa}
	-g_{\tau\beta}D_{\lambda}\Sigma^{\tau\beta}_{\alpha\sigma\kappa} 
	\biggr)\\&-\biggl((D_{ {\alpha}}g_{\rho\kappa})\tilde{\Gamma}^{\rho}_{\lambda\sigma}-g_{\beta\kappa}\tilde{\Gamma}^\beta_{\alpha\rho}\tilde{\Gamma}^{\rho}_{\lambda\sigma}-g_{\rho\beta}\tilde{\Gamma}^\beta_{\alpha\kappa}\tilde{\Gamma}^{\rho}_{\lambda\sigma}-  {\Gamma}^{\rho}_{\lambda\sigma}\Sigma_{\alpha\kappa\rho}-  {\Gamma}^{\rho}_{\lambda\sigma}\Sigma_{\alpha\rho\kappa}\biggr)\\&-\biggl((D_{ {\alpha}}g_{\rho\sigma})\tilde{\Gamma}^{\rho}_{\lambda\kappa}-g_{\beta\sigma}\tilde{\Gamma}^\beta_{\alpha\rho}\tilde{\Gamma}^{\rho}_{\lambda\kappa}-g_{\rho\beta}\tilde{\Gamma}^\beta_{\alpha\sigma}\tilde{\Gamma}^{\rho}_{\lambda\kappa}-  {\Gamma}^{\rho}_{\lambda\kappa}\Sigma_{\alpha\sigma\rho}- {\Gamma}^{\rho}_{\lambda\kappa}\Sigma_{\alpha\rho\sigma}\biggr)
	-(\alpha\leftrightarrow\lambda)  \\&
			= 
			-\biggl(   
			 D_{\lambda} \Sigma_{\alpha\kappa\sigma} 
			+ D_{\lambda} \Sigma_{\alpha\sigma\kappa} 
			\biggr)+\biggl(  {\Gamma}^{\rho}_{\lambda\sigma}\Sigma_{\alpha\kappa\rho}+  {\Gamma}^{\rho}_{\lambda\sigma}\Sigma _{\alpha\rho\kappa}\biggr)+\biggl(  {\Gamma}^{\rho}_{\lambda\kappa}\Sigma_{\alpha\sigma\rho}+ {\Gamma}^{\rho}_{\lambda\kappa}\Sigma_{\alpha\rho\sigma}\biggr)  	\\&+g_{\theta\kappa} \tilde{R}^{\theta}_{\;\;\sigma\alpha\lambda}+
			g_{\theta\sigma}\tilde{R}^{\theta}_{\;\;\kappa\alpha\lambda}-(\alpha\leftrightarrow\lambda)\\	=& 
			\biggl(   \nabla_{\alpha} \Sigma_{\lambda\kappa\sigma} +  \nabla_{\alpha} \Sigma_{\lambda\sigma\kappa}
			-\nabla_{\lambda} \Sigma_{\alpha\kappa\sigma} 
			- \nabla_{\lambda} \Sigma_{\alpha\sigma\kappa} 
			\biggr)		+ \tilde{R}_{\kappa\sigma\alpha\lambda}+
		 \tilde{R}_{\sigma\kappa\alpha\lambda}\\	=:&K_{\kappa\sigma\alpha\lambda} + \tilde{R}_{\kappa\sigma\alpha\lambda}+
		 \tilde{R}_{\sigma\kappa\alpha\lambda},
			\end{align*} 	
			where in the last lines we defined
				\begin{align*}	
	K_{\kappa\sigma\alpha\lambda}:&=   \nabla_{\alpha} \Sigma_{\lambda\kappa\sigma} +  \nabla_{\alpha} \Sigma_{\lambda\sigma\kappa}
		-\nabla_{\lambda} \Sigma_{\alpha\kappa\sigma} 
		- \nabla_{\lambda} \Sigma_{\alpha\sigma\kappa}  , 
					\end{align*} 
which renders 
		\begin{align*}	
	\tilde{R}_{\sigma\kappa\alpha\lambda}		=-\tilde{R}_{\kappa\sigma\alpha\lambda}  	-K_{\kappa\sigma\alpha\lambda}.   
						\end{align*} 
	The symmetry for $\overline{R}$ is the same as in the commutative case, the proof may be done in a similar fashion. We start by using $\overline{\Gamma}$ instead of $\tilde{\Gamma}$ using the definition from Theorem \ref{P3.1},		
		\begin{align*}	
	[\nabla_{\lambda}, \nabla_{\alpha}]  g_{\sigma\kappa}&=	
	\biggl( -( D_{\lambda}g_{\beta\kappa})\overline{\Gamma}^\beta_{\alpha\sigma}
	-g_{\beta\kappa} D_{\lambda}\overline{\Gamma}^\beta_{\alpha\sigma}
	-( D_{\lambda}g_{\beta\sigma})\overline{\Gamma}^\beta_{\alpha\kappa}
	-g_{\beta\sigma} D_{\lambda}\overline{\Gamma}^\beta_{\alpha\kappa} 
	\biggr)\\&
	-\biggl((D_{ {\alpha}}g_{\beta\kappa})\tilde{\Gamma}^{\beta}_{\lambda\sigma}-g_{\beta\kappa}\overline{\Gamma}^\beta_{\alpha\rho}\tilde{\Gamma}^{\rho}_{\lambda\sigma}-g_{\rho\beta}\overline{\Gamma}^\beta_{\alpha\kappa}\tilde{\Gamma}^{\rho}_{\lambda\sigma} \biggr)\\&-\biggl((D_{ {\alpha}}g_{\beta\sigma})\tilde{\Gamma}^{\beta}_{\lambda\kappa}-g_{\beta\sigma}\overline{\Gamma}^\beta_{\alpha\rho}\tilde{\Gamma}^{\rho}_{\lambda\kappa}-g_{\rho\beta}\overline{\Gamma}^\beta_{\alpha\sigma}\tilde{\Gamma}^{\rho}_{\lambda\kappa} \biggr)	-(\alpha\leftrightarrow\lambda), 
	\end{align*}
	where we substitute $\tilde{\Gamma}$ yet again,
	\begin{align*}
	[\nabla_{\lambda}, \nabla_{\alpha}]  g_{\sigma\kappa}&=	
	\biggl( -( D_{\lambda}g_{\beta\kappa})\overline{\Gamma}^\beta_{\alpha\sigma}
	-g_{\beta\kappa} D_{\lambda}\overline{\Gamma}^\beta_{\alpha\sigma}
	-( D_{\lambda}g_{\beta\sigma})\overline{\Gamma}^\beta_{\alpha\kappa}
	-g_{\beta\sigma} D_{\lambda}\overline{\Gamma}^\beta_{\alpha\kappa} 
	\biggr)\\&
	-\biggl((D_{ {\alpha}}g_{\beta\kappa})\overline{\Gamma}^{\beta}_{\lambda\sigma}-
	(D_{ {\alpha}}g_{\beta\kappa}) g^{\beta\nu}\Sigma_{\lambda\sigma\nu}
	-g_{\beta\kappa}\overline{\Gamma}^\beta_{\alpha\rho}
	\overline{\Gamma}^{\rho}_{\lambda\sigma}
\\&	+g_{\beta\kappa}g^{\rho\nu} {\Gamma}^\beta_{\alpha\rho}\Sigma_{\lambda\sigma\nu} 
	-g_{\rho\beta}\overline{\Gamma}^\beta_{\alpha\kappa}\overline{\Gamma}^{\rho}_{\lambda\sigma} 
		+g_{\rho\beta} {\Gamma}^\beta_{\alpha\kappa}g^{\rho\nu}\Sigma_{\lambda\sigma\nu}
	\biggr)\\&-\biggl((D_{ {\alpha}}g_{\beta\sigma})\overline{\Gamma}^{\beta}_{\lambda\kappa}-
	(D_{ {\alpha}}g_{\beta\sigma}) g^{\beta\nu}\Sigma_{\lambda\kappa\nu}-g_{\beta\sigma}\overline{\Gamma}^\beta_{\alpha\rho}
	\overline{\Gamma}^{\rho}_{\lambda\kappa}
\\&	+g_{\beta\sigma}g^{\rho\nu} {\Gamma}^\beta_{\alpha\rho}\Sigma_{\lambda\kappa\nu}  
	-g_{\rho\beta}\overline{\Gamma}^\beta_{\alpha\sigma}\overline{\Gamma}^{\rho}_{\lambda\kappa} 
	+g_{\rho\beta} {\Gamma}^\beta_{\alpha\sigma}g^{\rho\nu}\Sigma_{\lambda\kappa\nu}
	\biggr)\\&	-(\alpha\leftrightarrow\lambda ),  
	\end{align*}
	where cancellations   occur on the terms that are symmetric on $\alpha$ and $\lambda$ and we have
	\begin{align*}
	[\nabla_{\lambda}, \nabla_{\alpha}]  g_{\sigma\kappa}&= 
	-\biggl( -
	(D_{ {\alpha}}g_{\beta\kappa}) g^{\beta\nu}\Sigma_{\lambda\sigma\nu} 
		+g_{\rho\kappa} {\Gamma}^\rho_{\alpha\beta}g^{\beta\nu}\Sigma_{\lambda\sigma\nu} 
	+  {\Gamma}^\nu_{\alpha\kappa}\Sigma_{\lambda\sigma\nu}
	\biggr)\\&-\biggl( -
	(D_{ {\alpha}}g_{\beta\sigma}) g^{\beta\nu}\Sigma_{\lambda\kappa\nu} 
		+g_{\rho\sigma} {\Gamma}^\rho_{\alpha\beta}g^{\beta\nu}\Sigma_{\lambda\kappa\nu}  
	+  {\Gamma}^\nu_{\alpha\sigma}\Sigma_{\lambda\kappa\nu}
	\biggr)\\&	-(\alpha\leftrightarrow\lambda ) 	\\&+ \overline{R}_{\kappa\sigma\alpha\lambda}+
	\overline{R}_{\sigma\kappa\alpha\lambda}\\&= \overline{R}_{\kappa\sigma\alpha\lambda}+
	\overline{R}_{\sigma\kappa\alpha\lambda}\\&=0.
	\end{align*} 	 
	By using metricity we conclude 
				\begin{align*}	
	\overline{R}_{\kappa\sigma\alpha\lambda}=-
	\overline{R}_{\sigma\kappa\alpha\lambda}.
\end{align*}

\end{proof}

 	\bibliographystyle{alpha}
 	\bibliography{CLAST1}

\end{document}